\newtheorem{theorem}{Theorem}
\newtheorem{lemma}[theorem]{Lemma}
\newtheorem{observation}[theorem]{Observation}
\newtheorem{corollary}[theorem]{Corollary}
\newtheorem{example}[theorem]{Example}
\newtheorem{assumption}[theorem]{Assumption}
\DeclareMathOperator{\cost}{cost}
\DeclareMathOperator{\optcost}{opt-cost}
\DeclareMathOperator{\actcost}{act-cost}
\DeclareMathOperator{\extcost}{ext-cost}
\DeclareMathOperator{\relcost}{rel-cost}
\DeclareMathOperator{\nil}{nil}
\DeclareMathOperator{\pivot}{pivot}
\DeclareMathOperator{\new}{new}
\DeclareMathOperator{\old}{old}
\DeclareMathOperator{\E}{E}
\DeclareMathOperator{\Cov}{Cov}
\begin{document}

%\title{Towards Multi-tenant machine learning Infrastructures}
%\title{Towards ``Machine Learning as a Service'': \\
%The Multi-tenant Case}

\title{A Note On Operator-Level Query Execution Cost Modeling}

\author{
\alignauthor
Wentao Wu\\
\vspace{0.5em}
\affaddr{Microsoft Research, Redmond}\\
\email{wentao.wu@microsoft.com}
}

\maketitle

\begin{abstract}
External query execution cost modeling using query execution feedback has found its way in various database applications such as admission control and query scheduling.
Existing techniques in general fall into two categories, \emph{plan-level} cost modeling and \emph{operator-level} cost modeling.
It has been shown in the literature that operator-level cost modeling can often significantly outperform plan-level cost modeling.
In this paper, we study operator-level cost modeling from a robustness perspective.
We address two main challenges in practice regarding limited execution feedback (for certain operators) and mixed cost estimates due to the use of multiple cost modeling techniques.
We propose a framework that deals with these issues and present a comprehensive analysis of this framework.
We further provide a case study to demonstrate the efficacy of our framework in the context of index tuning, which is itself a new application of external cost modeling techniques.
\end{abstract}

\section{Introduction}

In recent years, there is substantial work that tries to provide more accurate estimate for query execution cost~\cite{AhmadDAB11-edbt,AkdereCRUZ12-brown-icde,DugganCPU11,Ganapathi-berkeley09,Li12Robust,WuCHN13,WuCZTHN13}.
Unlike early work that mainly focuses on improving cost estimates inside the optimizer (prominently, via improved cardinality estimates), this line of work chose to sit outside the query optimizer.
That is, the cost models proposed in this work are external to the query optimizer and have no impact on its plan choice.
The effectiveness of these external cost models has been demonstrated in various applications such as admission control~\cite{Tozer-QCop} and query scheduling~\cite{Ahmad-QShufflerVLDBJ11}.
However, their applications in other areas, such as query optimization and index tuning, remain limited.

Existing external cost modeling approaches can in general fall into two categories: \emph{plan-level} and \emph{operator-level} modeling.
In plan-level modeling, one first converts the entire query plan into a feature vector and then trains a machine learning model that learns the relationship between the feature vector and the corresponding plan execution time.
In operator-level modeling, one applies the same idea to -- instead of the entire plan -- each individual operators.
The estimated operator costs are then combined (e.g., summed up) to generate the estimated cost of the entire plan.
As was shown in previous work~\cite{AkdereCRUZ12-brown-icde,Li12Robust}, operator-level modeling can often significantly outperform plan-level modeling.
There are, however, three major challenges in operator-level modeling.

\vspace{0.5em}
\noindent\textbf{(Challenge 1: Appropriate Training)}
Operator-level models require appropriate training before they could be effective.
This then raises a question about training data --- on which dataset should we train the models?
A premise underlying these (and any) learning-based models is that the training set should be \emph{representative} for the workload.
As was demonstrated in previous work~\cite{AkdereCRUZ12-brown-icde}, learning-based models are much better if they are trained and tested over the same database and workload (see Section~\ref{sec:framework:challenge} for more details).
Therefore, the training data set needs to be harvested from past execution history over the same database.

\vspace{0.5em}
\noindent\textbf{(Challenge 2: Limited Execution Feedback)}
Previous work on learning-based models presumes the availability of sufficient training data, which is often not the case in a real production setting.
That is, it rarely happens that abundant execution information about a workload is available, especially for interactive workloads that often contain lots of ad-hoc queries.

\vspace{0.5em}
\noindent\textbf{(Challenge 3: Mixed Cost Estimates)}
Operator-level modeling requires training a model for each operator that may appear in a query execution plan.
Given the scarcity of past execution feedback, it is unlikely that each model can receive similar amount of training data.
For instance, for a database that contains very few or no indexes, the query execution plans may contain much fewer index-based nested-loop joins compared to hash joins.
Therefore, in general we may have little (or even no) execution feedback for certain operators.
Training models for such operators is impossible and a natural solution is fallback to optimizer's cost estimates.
Then, however, \emph{how do we combine cost estimates made by different modeling approaches}?
Note that the cost estimates may even have different semantics.
For example, external cost models usually target elapsed time or resource consumption such as CPU time, whereas optimizer's cost estimates are often more ``abstract'' and may not have concrete semantics.
In such situations, we cannot directly combine them (e.g., by adding them up).
%The problem of combining mixed cost estimates exists as long as we use different cost modeling approaches.

\vspace{0.5em}
With the above challenges in mind, in this paper we propose a simple yet general framework that operates on top of limited execution feedback, which consists of three steps.
First, we identify a set of \emph{backbone} operators that serve as workhorse for almost all workload queries yet have relatively abundant execution feedback.
In this work, we focus on using leaf operators, such as table scans, index scans, and index seeks, as backbone operators, though it is straightforward to include other operators such as joins.
The reason for this choice is obvious: Regardless of which query plan the optimizer chooses for a given query, it always accesses the same tables, whereas the internal operators such as joins can be quite different (e.g., different join orders may be chosen).
Feedback information such as input/output cardinality is therefore more likely reusable across query plans for leaf operators, which results in more reliable models.
Second, we build an external cost model for each leaf operator using existing techniques, and use the models to estimate costs for leaf operators in each query plan.
Third, we combine the cost estimates made by external cost models for the leaf operators with the optimizer's cost estimates for the internal operators, using a straightforward yet principled technique.

%Our experimental evaluation shows that this framework works well over most of the databases and workloads we have tested, though on certain databases it has no obvious advantage over the default approach that simply uses optimizer's cost estimates.
To understand when this framework can work and when it may not work, we further conduct a theoretical analysis.
Our analysis reveals that the effectiveness of this framework depends on how overwhelming the (diversity or variation in the) amount of work done by the backbone operators compared with the other operators.

%To summarize, this paper makes the following contributions:
%\begin{enumerate}[(\textbf{C}1)]
%    \item We take a first step towards the promising direction of utilizing query execution feedback to improve index tuning.
%    \item We propose a general framework that can handle sparse and inconsistent feedback.
%    \item We implement a specific version of the general framework and evaluate it via extensive experiments.
%    \item We study the framework from a theoretical perspective and extend our analysis into continuous index tuning.
%\end{enumerate}

%\noindent\textbf{(Limitations)}
%Not all these challenges can be addressed in certain applications.
%For example, if our goal is to predict query execution cost such as elapsed time or CPU time, then the issue of mixed cost estimates may not be resolved without a clear understanding of the semantics of the optimizer's cost model.
%However, if our goal is only to \emph{compare} the costs of different query plans, as in many applications such as query optimization or index tuning, then we do not need to worry about the semantics of the cost estimates.
%This gives us the freedom of converting one type of cost estimates to another.

\vspace{0.5em}
\noindent\textbf{(Paper Organization)}
The rest of the paper is organized as follows.
We study the practical challenges of utilizing execution feedback in detail, and propose a general framework that addresses these challenges (Section~\ref{sec:framework}).
We next present an analysis of the framework in Section~\ref{sec:analysis}.
In Section~\ref{sec:experiments}, we further present a case study of the applicability of the proposed framework in the context of index tuning.
We summarize related work in Section~\ref{sec:relatedwork} and conclude the paper in Section~\ref{sec:conclusion}.

%\vspace{-0.5em}
\section{The Framework}\label{sec:framework}

The high-level idea of utilizing execution feedback to build external, operator-level cost models is clear.
However, as was discussed in the introduction, there are several practical challenges that may limit the effectiveness of the cost models.
In the following, we start by understanding these challenges, in retrospect, from lessons learned by previous work.
We then propose a general framework that addresses these challenges and present a specific implementation that will be evaluated in our experiments.

\subsection{Challenges of Utilizing Feedback}\label{sec:framework:challenge}

Existing external cost modeling techniques all rely on certain degree of ``learning.''
In the literature, people have been trying either plan-level or operator-level cost modeling techniques.
It has been shown that operator-level modeling is superior to plan-level modeling when workload drifts~\cite{AkdereCRUZ12-brown-icde}.
Nonetheless, operator-level modeling is still sensitive to the training data in use.

\vspace{-0.5em}
\paragraph*{Lessons from Previous Work}
To illustrate this sensitivity, we studied the results reported by~\cite{Li12Robust}, which, as far as we know, represents the state-of-the-art operator-level modeling.
The authors of~\cite{Li12Robust} compared various operator-level modeling techniques in their experimental evaluation.
We observe the following two facts from their comparison results.

First, if we train and test the models using the same workload, the estimation accuracy is quite good even in the presence of cardinality estimation errors.
The authors of~\cite{Li12Robust} used the TPC-H benchmark to train and test the models, and found that more than 80\% of the test cases have a ratio error below 1.5 (i.e., 50\% relative error), and the percentage can be improved to 90\% if true cardinality information is available.

Second, if we train and test the models over different workloads, the estimation accuracy drops dramatically.
The authors of~\cite{Li12Robust} used the TPC-H benchmark to train the models and tested the models on the TPC-DS benchmark as well as two real workloads. %(Real-1 and Real-2).
Based on their reported results, only 30\% of the test cases now have a ratio error below 1.5 on TPC-DS, and the percentage is about 40\% on the two real workloads.
However, if true cardinality information is available, the percentage can be improved to 70\% over all the three workloads.
Unfortunately, true cardinality information is usually not available for queries in the testing set, and previous work has reported worse results if true/estimated cardinalities are used in training whereas estimated/true cardinaities are used in testing~\cite{AkdereCRUZ12-brown-icde}.

The above observations suggest that operator-level modeling approaches are sensitive to the training set.
Only if the testing queries are drawn from the same training workload should we expect good estimation accuracy.
Cardinality estimation error remains one of the major factors that prevent better generalization of the models accross different workloads: Fixing cardinality errors can give us a lift between 30\% and 40\% in terms of the ratio-error metric.
Interestingly, within the same workload cardinality errors do not have significant impact on model estimation accuracy.
This has also been evidenced by other previous work~\cite{AkdereCRUZ12-brown-icde}.
One intuition for this is that cardinality estimation errors depend on the difference between query optimizer's presumed data distribution and the actual data distribution.
As long as this difference is consistent across different queries over the same database, cardinality estimation errors are like systematic biases that could be accounted for by the models learnt from training data.
This is, however, usually not the case across workloads over different databases, and therefore the biases have to be learnt again by the model.
%Nevertheless, even if cardinality estimation errors were fixable, there is still a 20\% gap (70\% vs. 90\%) that suggests modeling issues beyond cardinality estimation.

\vspace{-0.5em}
\paragraph*{Discussion}
Lessons from previous work suggest that it may be too ambitious to expect an external cost model that works ``everywhere.''
A more realistic approach is to learn a cost model for a fixed database and workload, which, as has been demonstrated by various previous work, can outperform optimizer's cost estimates (with some naive scaling).
An even more interesting observation here by previous work is that the specific machine learning models do not matter too much on a fixed database and workload, as long as they capture both the linear and nonlinear factors in cost modeling~\cite{AkdereCRUZ12-brown-icde,Li12Robust,WuCZTHN13}.

Moreover, this approach is sensitive to the amount of execution feedback we possess, the collection of which may be expensive in many situations.
In the worst case, we may not even have any feedback so it is equivalent to using optimizer's cost estimates.
%Our work in this paper is orthogonal to this, though.
%In our following study, we assume a fixed set of execution feedback.

\vspace{-0.5em}
\paragraph*{Insufficient Feedback and Mixed Cost Estimates}

The above discussion naturally raises questions about insufficient feedback.
It is quite likely that we do not have enough feedback to train models for certain operators.
A natural solution is to use optimizer's cost estimates for such operators.
This results in a new challenge of combining two different types of cost estimates.

\begin{figure}
\centering
    \includegraphics[width=0.8\columnwidth]{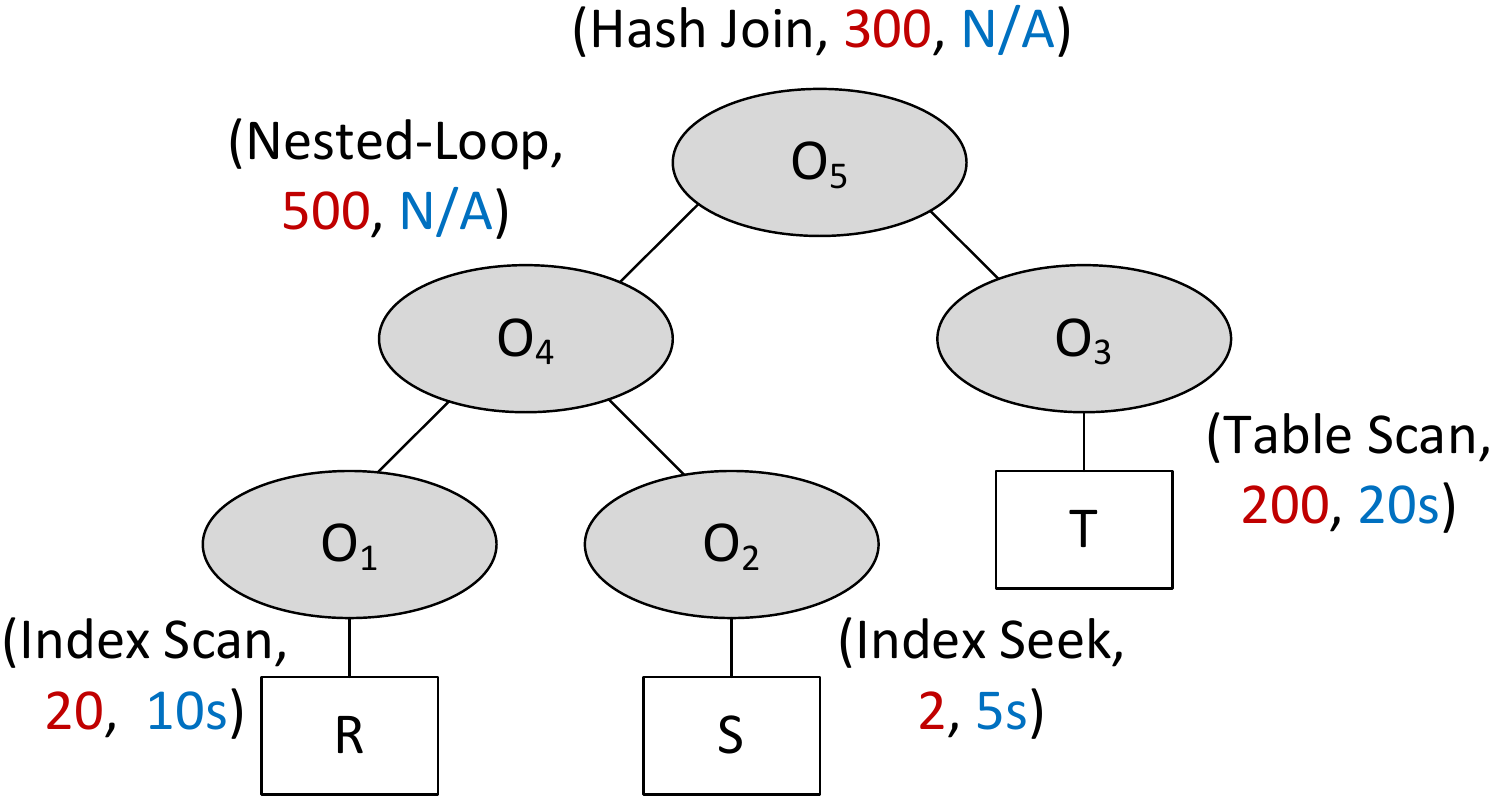}
\vspace{-1em}
\caption{A running example.}
\label{fig:running-example}
\vspace{-1em}
\end{figure}

%\vspace{-0.5em}
\begin{example}
To understand the issue of mixed cost estimates better, we present an annotated query execution plan as a running example in Figure~\ref{fig:running-example}.
Here $R$, $S$, and $T$ are tables, whereas $O_1$ to $O_5$ are physical operators.
We annotate each operator with its type and estimated cost.
In this example, the cost estimates for $O_1$, $O_2$, and $O_3$ come from external cost models, which indicate the estimated CPU time of these operators.
On the other hand, the cost estimates for $O_4$ and $O_5$ are made by the optimizer, which use some abstract metric and do not have concrete semantics.
\end{example}

%As a result, it remains for an application such as index tuning to figure out how to train an external cost model appropriately.
%In a real production environment, workload drift is quite likely.
%Therefore, it is important for a cost model to quickly adapt to a new workload pattern, given that the estimation accuracy heavily depends on the %assumption that testing queries are similar to training queries.

%On the other hand, index tuning introduces a new dimension that has not been studied by previous work on external cost modeling, namely, physical design change.
%All previous work assumed that the same fixed physical database is used in both training and testing.
%In the following, we present an empirical study that indicates the sensitivity of external cost modeling to physical design change.
%Specifically, we show that an external cost model trained over one index configuration may not be very effective over a different index configuration.

\subsection{A General Framework}

We propose a general framework to address the challenge of mixed cost estimates that naturally arises in the presence of insufficient execution feedback.
Algorithm~\ref{alg:combine-costs} presents the details.
We summarize the notation in Table~\ref{tab:notation}.

\begin{algorithm}
  \SetAlgoLined
  \KwIn{$P$, a query plan; $\mathcal{O}$, the set of operators with sufficient feedback; $\mathcal{M}$, the operator-level models built with feedback from $\mathcal{O}$.}
  \KwOut{$\cost(P)$, estimated cost of $P$.}
  \SetAlgoLined
  $\cost(P)\leftarrow 0$\;

  $o^{\pivot}\leftarrow PickPivot(\mathcal{O})$;  // Find the ``pivot'' (Algorithm~\ref{alg:pivot}).\\

  %// Estimate the cost for $P$.\\
  \ForEach{\emph{operator} $o\in P$}{
    \uIf{\emph{there is a model} $M\in\mathcal{M}$ \emph{for} $o$}{
        $\extcost(o)\leftarrow M(o)$\;
        $\cost(P)\leftarrow\cost(P)+\frac{\extcost(o)}{\actcost(o^{\pivot})}\cdot\optcost(o^{\pivot})$\;
    }\Else{
        $\cost(P)\leftarrow\cost(P)+\optcost(o)$\;
    }
  }

  \Return{$\cost(P)$}\;
  \caption{Combine mixed cost estimates.}
\label{alg:combine-costs}
\end{algorithm}

\begin{table}%[!htb]
\centering
\begin{tabular}{|l|l|}
\hline
Notation & Description \\
\hline
\hline
$o$ & An operator in the query plan\\
$\optcost(o)$ & The optimizer's estimated cost of $o$\\
$\actcost(o)$ & The actual execution cost of $o$\\
$\extcost(o)$ & The cost estimate of $o$ from external modeling\\
$o^{\pivot}$ & The pivot operator\\
\hline
\end{tabular}
\caption{Terminology and notation.}
\label{tab:notation}
\vskip -2ex
\end{table}

The main idea here for combining mixed cost estimates is simple.
We choose one ``pivot'' operator $o^{\pivot}$ from the operators with execution feedback (line 2).
We use the execution cost of $o^{\pivot}$ as a baseline, and compute the \emph{relative} cost
\begin{equation}
    \relcost(o)=\frac{\extcost(o)}{\actcost(o^{\pivot})}
\end{equation}
for any operator $o$ in the given plan $P$ where we have an external operator-level cost model.
We then scale the relative cost back using $\optcost(o^{\pivot})$.
For any operator in $P$ without sufficient feedback (i.e., we have not built a usable external cost model), we simply use the optimizer's cost estimate for it (lines 3 to 10).

\begin{example}
Continue with the running example in Figure~\ref{fig:running-example}.
Given the three operators $O_1$ to $O_3$ with available execution feedback, suppose that we choose $O_3$ as the pivot operator.
Assume that the external cost models for table scans, index scans, and index seeks are perfect, i.e., for any such operator $o$ we would have $\extcost(o)=\actcost(o)$.
The relative costs of $O_1$, $O_2$, and $O_3$ can then be easily computed:
$$\relcost(O_1)=\frac{\extcost(O_1)}{\actcost(O_3)}=\frac{\actcost(O_1)}{\actcost(O_3)}=\frac{10}{20}=0.5,$$
$$\relcost(O_2)=\frac{\extcost(O_2)}{\actcost(O_3)}=\frac{\actcost(O_2)}{\actcost(O_3)}=\frac{5}{20}=0.25,$$
and
$$\relcost(O_3)=\frac{\extcost(O_3)}{\actcost(O_3)}=\frac{\actcost(O_3)}{\actcost(O_3)}=\frac{20}{20}=1.$$
Meanwhile, the scaling factor is $\optcost(O_3)=200$.
Consequently, the adjusted estimated costs for $O_1$, $O_2$, and $O_3$ are $0.5\times 200 = 100$, $0.25\times 200 = 50$, and $1\times 200 = 200$.
Therefore, the final estimated cost for the example plan $P$ is
$\cost(P)=100+50+200+500+300=1150$.
\end{example}

The above procedure relies on the following assumption:
\begin{assumption}\label{assumption:consistency}
The cost estimates made by external cost models are comparable and the relative costs are well defined.
The same holds for optimizer's cost estimates.
\end{assumption}
In other words, Assumption~\ref{assumption:consistency} states that the semantics of cost estimates produced by the same model should be \emph{consistent}: If a cost model estimates CPU time for one operator then it should do the same for the others.
In theory, a cost model is not required to conform to this restriction.
For example, we could have a cost model that estimates the number of IO's for table scans whereas estimates the CPU time for hash joins.
We are not aware of such cost models in practice and do not consider them in this paper.
Moreover, if a cost model needs to produce a single number (instead of a vector of numbers) as its cost estimate at the operator level, then it has to be consistent; otherwise it is unclear how to combine the operator-level cost estimates at the plan level.

\vspace{-0.5em}
\paragraph*{Remarks}
Two remarks are in order. First, Algorithm~\ref{alg:combine-costs} does not specify how to pick the pivot operator $o^{\pivot}$. In theory, we could pick \emph{any} operator with execution feedback. However, it is clear that the choice of $o^{\pivot}$ has impact on the estimated cost, because we use $\optcost(o^{\pivot})$ as the scaling factor when combining with optimizer's cost estimates.
We study this impact in Section~\ref{sec:analysis}.

Second, Algorithm~\ref{alg:combine-costs} assumes that the external models are already built and fixed.
This should not be the case in reality.
As we accumulate execution feedback, it makes sense to train the models again.
Moreover, it is likely that operators lack of external cost modeling initially can catch up if enough feedback is available.
As a result, the inputs $\mathcal{O}$ and $\mathcal{M}$ will change dynamically.

\subsection{Implementation}\label{sec:framework:implementation}

Even if the inputs $\mathcal{O}$ and $\mathcal{M}$ are fixed, we still need to select the set $\mathcal{O}$.
In general, the selection depends on several factors:
\begin{itemize}
    \item First, the operators in $\mathcal{O}$ should have sufficient amount of execution feedback.
It is unlikely that we can train a promising model with little training data.
    \item Second, as cardinality information is used as one prominent feature in almost all existing operator-level cost modeling techniques, the operators in $\mathcal{O}$ should have relatively robust cardinality estimates.
    \item Third, the operators should cover significant amount of work performed by an arbitrary query plan. Otherwise, even if we can have perfect cost estimates for these operators, the impact on the overall cost estimate of the query plan is limited.
\end{itemize}

We therefore propose to focus on \emph{leaf} operators, including table scans, index scans, index seeks, and so on (depending on specific database systems), which meet the above three criteria well.
Nonetheless, there is no mandatory reason to exclude internal operators, though the decision is more workload-dependent.
For example, for workloads with little data correlation, cardinality estimation errors may not be severe and thus it is convincing to also include join operators into $\mathcal{O}$ if they have enough feedback.
In the rest of the paper, we call the operators in $\mathcal{O}$ the \emph{backbone} operators.

%as good cost estimates for these operators will have significant impact on plan cost estimation.
%Note that one could even use different backbone operators for different queries, if additional knowledge is available.
%Nevertheless, this is beyond the scope of the current paper.

\section{Analysis}\label{sec:analysis}

In this section, we present analysis of the framework illustrated in Algorithm~\ref{alg:combine-costs}.
We first discuss target performance metrics and then formalize the problem we will study.
Based on our problem formulation, we provide answers to the following questions:
\begin{itemize}
    \item When does the approach work and when may it not work?
    \item What can impact its performance and how can we improve?
\end{itemize}

\subsection{Performance Metrics}\label{sec:analysis:metrics}

The first question is how to measure the performance of our approach.
One could aim for reducing cost estimation errors, just like previous work on external cost modeling.
However, in many applications such as query optimization or index tuning, we are not very interested in the specific numbers returned by cost models,
because we only care about whether we can \emph{compare} query plans based on these numbers (i.e., estimated plan costs).
Therefore, as far as we can distinguish plans by their \emph{relative} costs we are satisfied.
For example, we perhaps only need to know that one plan is 20\% better/cheaper than the other one.
This suggests that we primarily consider the \emph{correlation} between the cost estimates and the actual costs of the plans.
Therefore, we use the well-known Pearson correlation coefficient (Pearson CC) as our performance metric. %which captures the correlation between estimated and actual plan costs.
%Since Pearson correlation coefficient is sensitive to outliers, we also use the more robust Spearman correlation coefficient, which is essentially the rank-based version of Pearson correlation coefficient.
%In our following analysis we will focus on Pearson correlation coefficient, as it is more analytically tractable.
%Since Pearson correlation coefficient is sensitive to outliers, in our experimental evaluation we report results for both Pearson and Spearman correlation coefficients.

\subsection{Problem Formulation}

Throughout this section, assume that we use leaf operators as the backbone operators, for which we have external cost models built using execution feedback.
We present the notation we will use in the following analysis in Table~\ref{tab:notation-analysis}.

\begin{table}[t]
\centering
\begin{tabular}{|l|l|}
\hline
Notation & Description \\
\hline
\hline
$\mathcal{L}$ & Leaf operators \\
$\mathcal{I}$ & Internal operators \\
\hline\hline
$P$ & Plan CPU time\\
$L$ & Leaf CPU time\\
$I$ & Internal CPU time\\
$\alpha$ & $\rho(L,I)$, Pearson CC between $L$ and $I$\\
$\sigma_L$ & Standard deviation of $L$\\
$\sigma_I$ & Standard deviation of $I$\\
$\eta$ & $\eta =\frac{\sigma_L}{\sigma_I}$\\
\hline\hline
$P'$ & Estimated plan cost\\
$L'$ & Estimated leaf cost\\
$I'$ & Estimated internal cost\\
$\sigma_{L'}$ & Standard deviation of $L'$\\
$\sigma_{I'}$ & Standard deviation of $I'$\\
$\eta'$ & $\eta =\frac{\sigma_{L'}}{\sigma_{I'}}$\\
$\beta$ & $\rho(L,I')$, Pearson CC between $L$ and $I'$\\
$\gamma$ & $\rho(I,I')$, Pearson CC between $I$ and $I'$\\
\hline\hline
$\rho$ & $\rho(P,P')$, Pearson CC between $P$ and $P'$\\
\hline
\end{tabular}
\caption{Notation used in the formal analysis.}
\label{tab:notation-analysis}
\vskip -2ex
\end{table}

We use $P$, $L$, and $I$ to represent the total CPU time spent on the whole plan, the leaf operators, and the internal operators, respectively.
Clearly, $P=L+I$, where
\begin{equation}
  L=\sum\nolimits_{o\in\mathcal{L}}\actcost(o)
\end{equation}
and
\begin{equation}
  I=\sum\nolimits_{o\in\mathcal{I}}\actcost(o).
\end{equation}
Similarly, $P'=L'+I'$, where, by Algorithm~\ref{alg:combine-costs},
\begin{equation}\label{eq:L-prim}
  L'=\sum\nolimits_{o\in\mathcal{L}}\frac{\extcost(o)}{\actcost(o^{\pivot})}\cdot\optcost(o^{\pivot})
\end{equation}
and
\begin{equation}
    I'=\sum\nolimits_{o\in\mathcal{I}}\optcost(o).
\end{equation}
To simplify our analysis, assume that the external cost models are perfect, namely, $\extcost(o)=\actcost(o)$ for any $o\in\mathcal{L}$.
(We will study the impact of cost modeling errors later.)
Moreover, define a constant
\begin{equation}\label{eq:lambda}
    \lambda = \frac{\optcost(o^{\pivot})}{\actcost(o^{\pivot})}.
\end{equation}
By Equation~\ref{eq:L-prim}, it follows that
\begin{equation}\label{eq:L-prim-new}
    L'=\lambda\cdot\sum\nolimits_{o\in\mathcal{L}}\actcost(o)=\lambda\cdot L.
\end{equation}

\subsection{Correlation Analysis}

We are interested in the Pearson correlation coefficient $\rho(P,P')$ between $P$ and $P'$.
Based on the previous formulation, we have
\begin{equation}\label{eq:rho-def}
    \rho=\rho(P,P')=\rho(L+I,L'+I')=\rho(L+I,\lambda\cdot L + I').
\end{equation}
With the notation in Table~\ref{tab:notation-analysis}, we can prove the following lemma.
%(The proofs of the theoretical results can be found in Appendix~\ref{sec:proofs}.)
\begin{lemma}\label{lemma:rho}
$\rho$ only depends on $\eta$, $\eta'$, $\alpha$, $\beta$, and $\gamma$. Specifically, %we have
\begin{equation}\label{eq:rho}
    \rho=\frac{\eta\eta'+\alpha\eta'+\beta\eta+\gamma}{\sqrt{\eta^2+2\alpha\eta+1}\cdot\sqrt{(\eta')^2+2\beta\eta'+1}}.
\end{equation}
\end{lemma}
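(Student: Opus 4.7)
The plan is to compute $\rho(P,P')$ directly from its definition $\Cov(P,P')/(\sigma_P\sigma_{P'})$, using the key structural fact from Equation~\ref{eq:L-prim-new} that $L'=\lambda L$, and then to show that the constant $\lambda$ drops out when the expression is rewritten in terms of the ratios $\eta$ and $\eta'$.

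First I would expand the covariance using bilinearity. Writing $P=L+I$ and $P'=\lambda L + I'$, we have
\begin{equation*}
\Cov(P,P') = \lambda\Var(L) + \lambda\Cov(L,I) + \Cov(L,I') + \Cov(I,I'),
\end{equation*}
which, by translating correlations to covariances via $\Cov(X,Y)=\rho(X,Y)\sigma_X\sigma_Y$ and using the definitions of $\alpha,\beta,\gamma$, becomes
\begin{equation*}
\Cov(P,P') = \lambda\sigma_L^2 + \lambda\alpha\sigma_L\sigma_I + \beta\sigma_L\sigma_{I'} + \gamma\sigma_I\sigma_{I'}.
\end{equation*}
In parallel, the variance-of-a-sum formula gives $\sigma_P^2=\sigma_L^2+2\alpha\sigma_L\sigma_I+\sigma_I^2$ and $\sigma_{P'}^2=\lambda^2\sigma_L^2+2\lambda\beta\sigma_L\sigma_{I'}+\sigma_{I'}^2$.

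The key manipulation is then to divide the numerator by $\sigma_I\sigma_{I'}$ and the two factors of the denominator by $\sigma_I$ and $\sigma_{I'}$ respectively. Since $L'=\lambda L$ implies $\sigma_{L'}=\lambda\sigma_L$ (taking $\lambda>0$, which is legitimate because both the optimizer's cost and the actual cost of the pivot are positive), the defined ratio $\eta'=\sigma_{L'}/\sigma_{I'}$ equals $\lambda\sigma_L/\sigma_{I'}$. Therefore $\sigma_L/\sigma_{I'}=\eta'/\lambda$, and every occurrence of $\lambda$ in the numerator combines with a $\sigma_L/\sigma_{I'}$ factor to yield a clean $\eta'$; for instance, $\lambda\sigma_L^2/(\sigma_I\sigma_{I'})=\eta\cdot(\lambda\sigma_L/\sigma_{I'})=\eta\eta'$, and similarly $\lambda\alpha\sigma_L/\sigma_{I'}=\alpha\eta'$. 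This step makes the numerator equal to $\eta\eta'+\alpha\eta'+\beta\eta+\gamma$. The same substitution inside $\sigma_{P'}/\sigma_{I'}$ turns it into $\sqrt{(\eta')^2+2\beta\eta'+1}$, while $\sigma_P/\sigma_I=\sqrt{\eta^2+2\alpha\eta+1}$ is immediate. Assembling the ratio gives Equation~\ref{eq:rho}.

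The only real obstacle is the bookkeeping around $\lambda$: the result is supposed to be independent of $\lambda$ (since $\lambda$ is not listed among the dependencies in the lemma statement), so one must make sure that the rescaling through $\sigma_{I'}$ absorbs each $\lambda$ exactly once. This is where the identity $\eta'=\lambda\sigma_L/\sigma_{I'}$ is essential; without using it the expression would appear to depend on $\lambda$. Beyond this, the argument is a routine application of the bilinearity of covariance and the definition of the Pearson coefficient, so no additional machinery is required.
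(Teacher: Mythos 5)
Your proposal is correct and follows essentially the same route as the paper's proof: expand $\Cov(L+I,\lambda L+I')$ by bilinearity, compute the two variances, divide through by $\sigma_I\sigma_{I'}$, and absorb $\lambda$ via the identity $\eta'=\lambda\sigma_L/\sigma_{I'}$. No gaps.
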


\begin{proof}
By Equation~\ref{eq:rho-def}, we have
$$\rho=\rho(L+I,\lambda L + I')=\frac{\Cov(L+I,\lambda L + I')}{\sigma_{L+I}\cdot\sigma_{\lambda L + I'}}.$$
By the definition of covariance,
$$\Cov(L+I,\lambda L + I')=\E[(L+I)-\E(L+I)][(\lambda L + I')-\E(\lambda L + I')].$$
Using simple arithmetic calculation, we can obtain
$$\Cov(L+I,\lambda L + I')=\lambda\sigma_L^2+\lambda\cdot\Cov(L,I)+\Cov(L,I')+\Cov(I,I').$$
On the other hand, by the definition of variance,
$$\sigma_{L+I}^2=\E[(L+I)-\E(L+I)]^2=\sigma_L^2+2\cdot\Cov(L,I)+\sigma_I^2.$$
Similarly, we have
$$\sigma_{\lambda L+I'}^2=\lambda^2\sigma_L^2+2\lambda\cdot\Cov(L,I')+\sigma_{I'}^2.$$
Using the relationships
$$\Cov(L,I)=\rho(L,I)\sigma_L\sigma_I=\alpha\sigma_L\sigma_I,$$
$$\Cov(L,I')=\rho(L,I')\sigma_L\sigma_{I'}=\beta\sigma_L\sigma_{I'},$$
$$\Cov(I,I')=\rho(I,I')\sigma_I\sigma_{I'}=\gamma\sigma_I\sigma_{I'},$$
it then follows that
$$\Cov(L+I,\lambda L + I')=\lambda\sigma_L^2+\lambda\alpha\sigma_L\sigma_I+\beta\sigma_L\sigma_{I'}+\gamma\sigma_I\sigma_{I'},$$
$$\sigma_{L+I}^2=\sigma_L^2+2\alpha\sigma_L\sigma_I+\sigma_I^2,$$
$$\sigma_{\lambda L+I'}^2=\lambda^2\sigma_L^2+2\lambda\beta\sigma_L\sigma_{I'}+\sigma_{I'}^2.$$
As a result, we have
\begin{equation*}
    \rho =\frac{\lambda\sigma_L^2+\lambda\alpha\sigma_L\sigma_I+\beta\sigma_L\sigma_{I'}+\gamma\sigma_I\sigma_{I'}}{\sqrt{\sigma_L^2+2\alpha\sigma_L\sigma_I+\sigma_I^2}\cdot\sqrt{\lambda^2\sigma_L^2+2\lambda\beta\sigma_L\sigma_{I'}+\sigma_{I'}^2}}.
\end{equation*}
Dividing both the numerator and the denominator by $\sigma_I\sigma_{I'}$, %we have
\begin{equation*}
\rho=\frac{\lambda\frac{\sigma_L}{\sigma_I}\frac{\sigma_L}{\sigma_{I'}}+\lambda\alpha\frac{\sigma_L}{\sigma_{I'}}+\beta\frac{\sigma_L}{\sigma_I}+\gamma}{\sqrt{\big(\frac{\sigma_L}{\sigma_I}\big)^2+2\alpha\frac{\sigma_L}{\sigma_I}+1}\cdot\sqrt{\lambda^2\big(\frac{\sigma_L}{\sigma_{I'}}\big)^2+2\lambda\beta\frac{\sigma_L}{\sigma_{I'}}+1}}.
\end{equation*}
Since $\eta=\frac{\sigma_L}{\sigma_I}$ and $\eta'=\frac{\sigma_{L'}}{\sigma_{I'}}=\frac{\lambda\sigma_L}{\sigma_{I'}}$, it follows that
\begin{equation*}
\rho=\frac{\eta\eta'+\alpha\eta'+\beta\eta+\gamma}{\sqrt{\eta^2+2\alpha\eta+1}\cdot\sqrt{(\eta')^2+2\beta\eta'+1}}.
\end{equation*}
This completes the proof of the lemma.
\end{proof}

We can have several interesting observations based on Lemma~\ref{lemma:rho}.
First, we have the following lower bounds for $\rho$ that only depend on $\eta$ and $\eta'$ (Theorem~\ref{theorem:lower-bound} and Corollary~\ref{corollary:lower-bound-positive}).
\begin{theorem}\label{theorem:lower-bound}
Define a function
$$f(\eta,\eta')=\frac{\eta\eta'-\eta'-\eta -1}{(\eta+1)(\eta'+1)}=\frac{1-\frac{1}{\eta}-\frac{1}{\eta'}-\frac{1}{\eta\eta'}}{(1+1/\eta)(1+1/\eta')}.$$
For any $0\leq\eta,\eta'<\infty$, we have $\rho\geq f(\eta,\eta').$
\end{theorem}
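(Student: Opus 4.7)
The plan is to apply Lemma~\ref{lemma:rho} directly and bound the resulting rational expression for $\rho$ by replacing $\alpha$, $\beta$, $\gamma$ with their extreme values in $[-1,1]$, the box in which every Pearson correlation coefficient must lie.

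First, I would lower-bound the numerator $\eta\eta' + \alpha\eta' + \beta\eta + \gamma$. Since $\eta,\eta' \geq 0$, the coefficients attached to $\alpha$, $\beta$, $\gamma$ are all non-negative, so the linear expression is minimized over $[-1,1]^3$ at $\alpha=\beta=\gamma=-1$, giving
$$\eta\eta' + \alpha\eta' + \beta\eta + \gamma \;\geq\; \eta\eta' - \eta - \eta' - 1,$$
which is exactly the numerator of $f(\eta,\eta')$.

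Second, I would upper-bound the denominator. For $\eta \geq 0$, the quantity $\eta^2 + 2\alpha\eta + 1$ is non-decreasing in $\alpha$, so $\alpha \leq 1$ gives $\sqrt{\eta^2 + 2\alpha\eta + 1} \leq \sqrt{(\eta+1)^2} = \eta+1$, and similarly $\sqrt{(\eta')^2 + 2\beta\eta' + 1} \leq \eta'+1$. Hence the denominator is at most $(\eta+1)(\eta'+1)$, matching the denominator of $f(\eta,\eta')$.

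Combining the two bounds yields $\rho \geq f(\eta,\eta')$, and the step I expect to need the most care is this final combination: substituting a numerator lower bound and a denominator upper bound produces a valid lower bound on the ratio only when the numerator lower bound is non-negative. In the regime $\eta\eta' \geq \eta + \eta' + 1$ this is automatic, and the same observation will also immediately yield the positive lower bound in Corollary~\ref{corollary:lower-bound-positive}. In the complementary regime where $f(\eta,\eta') < 0$, the cleanest completion is a short split on the sign of the actual numerator: when $\eta\eta' + \alpha\eta' + \beta\eta + \gamma \geq 0$ one gets $\rho \geq 0 \geq f(\eta,\eta')$ for free, and when it is negative one tracks inequality directions carefully under multiplication by negative quantities, using $|N| \leq |N_{\min}|$ together with $D \leq D_{\max}$ to conclude $N/D \geq N_{\min}/D_{\max}$. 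This sign bookkeeping is where the bulk of the argument lives; the algebraic content of the theorem itself is just the two one-line extremizations above.
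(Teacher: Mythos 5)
Your two extremizations are exactly the ones in the paper's own proof, and the concern you raise about combining them is the right one: the paper writes $\rho\geq N_{\min}/D_{\max}$ without comment, where $N=\eta\eta'+\alpha\eta'+\beta\eta+\gamma$, $N_{\min}=\eta\eta'-\eta-\eta'-1$, and $D_{\max}=(\eta+1)(\eta'+1)$, and this step is only licensed when $N\geq 0$. The problem is that your proposed repair of the remaining case does not work. When $N<0$, the facts $|N|\leq|N_{\min}|$ and $D\leq D_{\max}$ pull in opposite directions: to bound $|N|/D$ from above you would need a \emph{lower} bound on $D$, and the available bound is an upper bound. In fact no bookkeeping can rescue this case, because the claimed inequality is simply violated there for some admissible $\alpha,\beta,\gamma$. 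Take $\eta=\eta'=1$ (e.g.\ $\lambda=1$ and $\sigma_L=\sigma_I=\sigma_{I'}=1$) with $\alpha=\beta=-0.9$ and $\gamma=0.62$; this is a genuine correlation structure, realized for instance by $L\sim N(0,1)$, $I=-0.9L+\sqrt{0.19}\,Z$, $I'=-0.9L-\sqrt{0.19}\,Z$ with $Z$ independent of $L$. Equation~\ref{eq:rho} then gives $\rho=(1-0.9-0.9+0.62)/(\sqrt{0.2}\cdot\sqrt{0.2})=-0.18/0.2=-0.9$, whereas $f(1,1)=-1/2$, so $\rho<f(\eta,\eta')$.

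What your cases (a) and (b) do correctly establish --- and all that the paper's proof establishes --- is the bound under the additional hypothesis $N\geq 0$: there $\rho=N/D\geq N/D_{\max}\geq N_{\min}/D_{\max}=f(\eta,\eta')$ is valid. This covers the regime the paper actually uses, namely $f(\eta,\eta')\geq 0$ (equivalently $(\eta-1)(\eta'-1)\geq 2$, which forces $N\geq N_{\min}\geq 0$ and, in particular, the case $\eta,\eta'\gg 1$ behind the $\rho\approx 1$ observation), as well as the setting of Corollary~\ref{corollary:lower-bound-positive}, where $\alpha,\beta,\gamma\geq 0$ makes $N\geq 0$ automatic. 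So your instinct about where the difficulty lies is sharper than the paper's own treatment, but the sign split you sketch does not close the gap; a correct statement must either add the hypothesis $\eta\eta'+\alpha\eta'+\beta\eta+\gamma\geq 0$ (or simply restrict to $f(\eta,\eta')\geq 0$), or fall back to the trivial bound $\rho\geq -1$ in the remaining case.
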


\begin{proof}
We have $-1\leq \alpha,\beta,\gamma\leq 1$. Based on Equation~\ref{eq:rho},
$$\eta\eta'+\alpha\eta'+\beta\eta+\gamma\geq \eta\eta'-\eta'-\eta -1,$$
$$\sqrt{\eta^2+2\alpha\eta+1}\leq\sqrt{\eta^2+2\eta+1}=\eta+1,$$
$$\sqrt{(\eta')^2+2\beta\eta'+1}\leq\sqrt{(\eta')^2+2\eta'+1}=\eta'+1.$$
As a result, it follows that
$$\rho\geq\frac{\eta\eta'-\eta'-\eta -1}{(\eta+1)(\eta'+1)}=f(\eta,\eta').$$
This completes the proof the theorem.
\end{proof}

%\vspace{-4em}
\begin{corollary}\label{corollary:lower-bound-positive}
Define a function
$$g(\eta,\eta')=\frac{\eta}{\eta+1}\cdot\frac{\eta'}{\eta'+1}=\frac{1}{1+1/\eta}\cdot\frac{1}{1+1/\eta'}.$$
If $0\leq\alpha,\beta,\gamma\leq 1$, then $\rho\geq g(\eta,\eta').$
\end{corollary}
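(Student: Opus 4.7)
The plan is to reuse the same bracketing strategy as in Theorem~\ref{theorem:lower-bound}, but tighten it by exploiting the extra assumption that $\alpha,\beta,\gamma$ are non-negative. I would start from the closed-form for $\rho$ given by Lemma~\ref{lemma:rho}, namely
\begin{equation*}
\rho=\frac{\eta\eta'+\alpha\eta'+\beta\eta+\gamma}{\sqrt{\eta^2+2\alpha\eta+1}\cdot\sqrt{(\eta')^2+2\beta\eta'+1}},
\end{equation*}
and bound the numerator from below and the denominator from above separately.

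For the numerator, since $\eta,\eta'\geq 0$ (they are ratios of standard deviations) and $\alpha,\beta,\gamma\geq 0$ by hypothesis, all four summands are non-negative, so dropping the last three gives the clean bound $\eta\eta'+\alpha\eta'+\beta\eta+\gamma\geq \eta\eta'$. For the denominator, the assumption $\alpha\leq 1$ yields $\eta^2+2\alpha\eta+1\leq \eta^2+2\eta+1=(\eta+1)^2$, and similarly $\beta\leq 1$ yields $(\eta')^2+2\beta\eta'+1\leq(\eta'+1)^2$; taking square roots (both sides non-negative) gives the two factor bounds. Combining these two sides of the inequality produces
\begin{equation*}
\rho\geq\frac{\eta\eta'}{(\eta+1)(\eta'+1)}=\frac{\eta}{\eta+1}\cdot\frac{\eta'}{\eta'+1}=g(\eta,\eta'),
\end{equation*}
which is exactly the claimed bound.

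There is no real obstacle here; the corollary is essentially a direct specialization of Theorem~\ref{theorem:lower-bound} in which the cross-correlations are known to be non-negative. The only subtlety worth stating explicitly is why one is allowed to drop $\alpha\eta'+\beta\eta+\gamma$ from the numerator without flipping the inequality, and why the square-root function preserves the denominator bound; both follow from $\eta,\eta'\geq 0$ together with the sign assumption on $\alpha,\beta,\gamma$. Because Pearson correlations always lie in $[-1,1]$, the upper bounds $\alpha\leq 1$ and $\beta\leq 1$ used for the denominator are automatic and do not require the new hypothesis; it is only the numerator step that uses the stronger assumption $\alpha,\beta,\gamma\geq 0$.
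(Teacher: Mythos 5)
Your proof is correct and is exactly the adaptation the paper has in mind when it says the proof is ``very similar'' to that of Theorem~\ref{theorem:lower-bound}: you keep the same denominator bound $\sqrt{\eta^2+2\alpha\eta+1}\cdot\sqrt{(\eta')^2+2\beta\eta'+1}\leq(\eta+1)(\eta'+1)$ and simply replace the worst-case numerator bound $\eta\eta'-\eta'-\eta-1$ by $\eta\eta'$, which is valid because the non-negativity of $\alpha,\beta,\gamma$ (together with $\eta,\eta'\geq 0$) makes the dropped terms non-negative. Your closing remark correctly identifies that only the numerator step uses the new hypothesis; no gaps.
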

The proof is very similar and thus omitted.
Clearly, $g(\eta,\eta')> f(\eta,\eta')$.
Intuitively, positive $\alpha$, $\beta$, and $\gamma$ suggest positive correlations between $L$, $I$, and $I'$, which is usually the case in real workloads (see Section~\ref{sec:analysis:workloads}).

Based on Theorem~\ref{theorem:lower-bound} and Corollary~\ref{corollary:lower-bound-positive}, we immediately have the following important observation:
\begin{observation}
If $\eta\gg 1$ and $\eta'\gg 1$, we have $f(\eta,\eta')\approx 1$ and $g(\eta,\eta')\approx 1$. As a result, $\rho\approx 1$.
\end{observation}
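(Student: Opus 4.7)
The plan is to combine the two lower bounds established in Theorem~\ref{theorem:lower-bound} and Corollary~\ref{corollary:lower-bound-positive} with the trivial upper bound $\rho \leq 1$ (which is always satisfied by a Pearson correlation coefficient, via Cauchy--Schwarz) to sandwich $\rho$ near $1$ whenever $\eta$ and $\eta'$ are both large.

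First, I would exploit the reciprocal forms of $f$ and $g$ that are already written in the statements of Theorem~\ref{theorem:lower-bound} and Corollary~\ref{corollary:lower-bound-positive}:
\begin{equation*}
f(\eta,\eta')=\frac{1-\tfrac{1}{\eta}-\tfrac{1}{\eta'}-\tfrac{1}{\eta\eta'}}{(1+1/\eta)(1+1/\eta')},\qquad g(\eta,\eta')=\frac{1}{(1+1/\eta)(1+1/\eta')}.
\end{equation*}
Because $\eta\gg 1$ and $\eta'\gg 1$, each of the quantities $1/\eta$, $1/\eta'$, and $1/(\eta\eta')$ is close to $0$. Substituting these approximations into the expressions above, the numerator of $f$ tends to $1$, the denominator of both $f$ and $g$ tends to $1$, and hence $f(\eta,\eta')\approx 1$ and $g(\eta,\eta')\approx 1$.

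Next I would invoke Theorem~\ref{theorem:lower-bound}, which gives $\rho \geq f(\eta,\eta')$ unconditionally (and Corollary~\ref{corollary:lower-bound-positive} gives the sharper $\rho\geq g(\eta,\eta')$ under the sign assumption on $\alpha$, $\beta$, $\gamma$). Together with $\rho\leq 1$, this pins $\rho$ into an interval of the form $[1-\varepsilon,\,1]$ where $\varepsilon\to 0$ as $\eta,\eta'\to\infty$, yielding $\rho\approx 1$.

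I do not expect any real obstacle here: the claim is essentially a limit statement about two explicit rational functions, and the sandwich argument reduces the whole observation to substituting small quantities into the denominators and numerators. The only minor care point is to state cleanly that $\rho \leq 1$ always (a standard property of Pearson CC) so that the one-sided bound from Theorem~\ref{theorem:lower-bound} is enough to conclude $\rho\approx 1$ rather than merely $\rho\gtrsim 1$.
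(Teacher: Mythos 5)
Your proposal is correct and follows exactly the route the paper intends: the paper states this observation as an immediate consequence of Theorem~\ref{theorem:lower-bound} and Corollary~\ref{corollary:lower-bound-positive}, and the reciprocal forms of $f$ and $g$ are written precisely so that $1/\eta\approx 0$ and $1/\eta'\approx 0$ make both bounds tend to $1$, which together with $\rho\leq 1$ sandwiches $\rho$ near $1$. Your explicit mention of the trivial upper bound $\rho\leq 1$ is a small but welcome addition the paper leaves implicit.
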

That is, when both $\eta$ and $\eta'$ are sufficiently large, we should expect very strong correlation between the estimated cost (using Algorithm~\ref{alg:combine-costs}) and the actual cost of a plan.
More generally, it is easy to see both $f(\eta,\eta')$ and $g(\eta,\eta')$ are increasing functions with respect to $\eta$ and $\eta'$.
This implies that we need to increase both $\eta$ and $\eta'$ to improve $\rho$.

\subsubsection{Improve Correlation}

Recall that $\eta=\frac{\sigma_L}{\sigma_I}$ whereas $\eta'=\frac{\lambda\sigma_L}{\sigma_{I'}}$.
$\sigma_L$ and $\sigma_I$ are standard deviation of the \emph{actual} leaf and internal operator CPU time, whereas $\sigma_{I'}$ is the standard deviation of the optimizer's estimated internal cost.
All three are constants for a given workload so we cannot change $\eta$ in Algorithm~\ref{alg:combine-costs}.
On the other hand, $\eta'$ depends on $\lambda$ as well, which depends on our choice of the pivot operator.
Because $\eta'$ increases as $\lambda$ increases, it suggests that we should pick the pivot operator that maximizes $\lambda$, as defined by Equation~\ref{eq:lambda}.
Algorithm~\ref{alg:pivot} presents the details of our selection strategy for the pivot operator based on this idea.

\begin{algorithm}[t]
  \SetAlgoLined
  \KwIn{$\mathcal{O}$, the operators with execution feedback.}
  \KwOut{$o^{\pivot}$, the pivot operator.}
  \SetAlgoLined

  $o^{\pivot}\leftarrow\nil$;
  $\lambda\leftarrow 0$\;
  \ForEach{$o\in\mathcal{O}$}{
    $\lambda_o\leftarrow\frac{\optcost(o)}{\actcost(o)}$\;
    \If{$\lambda_o>\lambda$}{
        $\lambda\leftarrow\lambda_o$\;
        $o^{\pivot}\leftarrow o$\;
    }
  }

  \Return{$o^{\pivot}$}\;
  \caption{Pick the pivot operator.}
\label{alg:pivot}
\end{algorithm}

\subsubsection{A Study of Real Workloads}\label{sec:analysis:workloads}

While the previous analysis characterizes the connection between $\rho$, $\eta$, and $\eta'$, it does not tell us what we should expect in practice.
We thus studied 36 real workloads in the context of index tuning with various physical design (e.g., both row store and column store with necessary indexes) and with at least 10 queries (see Section~\ref{sec:experiments} for details of our settings in index tuning).

\begin{figure}
\centering
\subfigure[Distribution of $\eta$: mean = $369.4$, median = $18.8$.]{ \label{fig:dist:eta}
\includegraphics[width=0.9\columnwidth]{./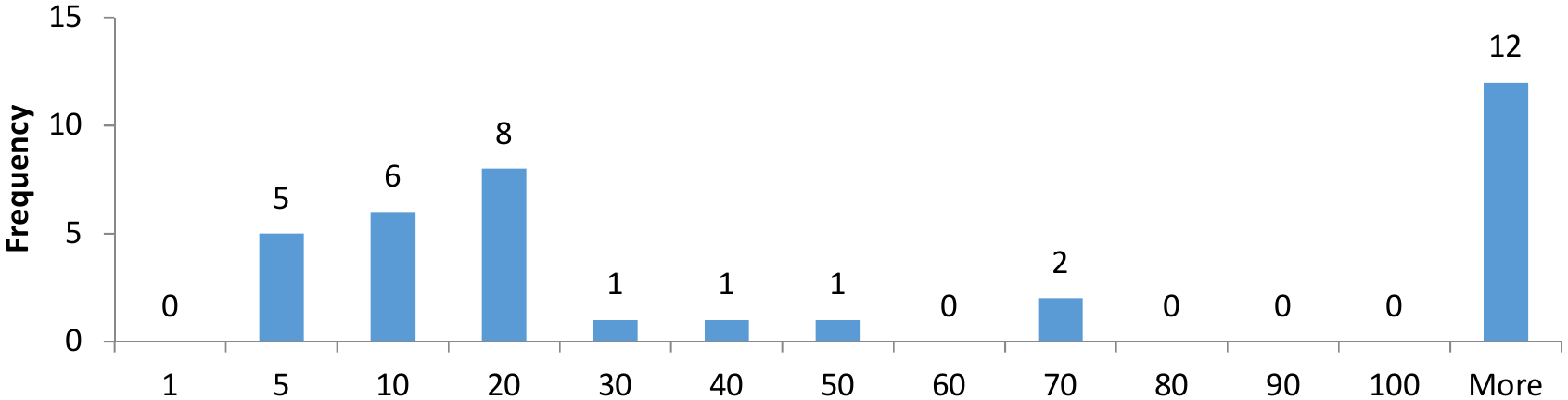}}
\subfigure[Distribution of $\eta'$: mean = $1.2\times 10^7$, median = $6.8\times 10^3$.]{ \label{fig:dist:eta-prim}
\includegraphics[width=0.9\columnwidth]{./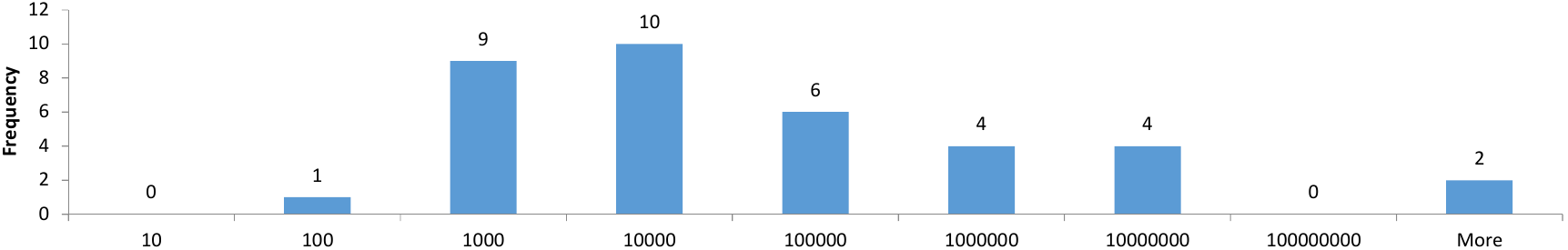}}
\vskip -2ex
\caption{The distributions of $\eta$ and $\eta'$ on real workloads.}
\label{fig:dist:workload}
\vskip -2ex
\end{figure}

Figure~\ref{fig:dist:workload} presents the distributions of $\eta$ and $\eta'$.
We computed $\eta'$ by using Algorithm~\ref{alg:pivot} to pick the pivot operator and therefore $\lambda$.
We observe that $\eta'$ is much larger than $\eta$.
%In fact, the minimum $\eta'$ we observed is $278.1$, which results in $\frac{1}{\eta'}=0.0036$.
Therefore it is safe to ignore the factor $\frac{1}{\eta'}$ in both $f(\eta,\eta')$ and $g(\eta,\eta')$.
Consequently, we have the following approximations:
\begin{observation}
If $\eta'\gg 1$, we have $1/\eta'\approx 0$. As a result, $f(\eta,\eta')\approx \frac{1-1/\eta}{1+1/\eta}$ and $g(\eta,\eta')\approx \frac{1}{1+1/\eta}$.
\end{observation}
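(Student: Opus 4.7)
The plan is to obtain this observation by directly substituting $1/\eta' \to 0$ into the closed-form expressions for $f(\eta,\eta')$ and $g(\eta,\eta')$ established in Theorem~\ref{theorem:lower-bound} and Corollary~\ref{corollary:lower-bound-positive}. Both expressions are already written so that $1/\eta'$ appears explicitly as a small perturbation of a function of $\eta$ alone, so no new identity is needed; the argument reduces to an elementary continuity calculation.

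Concretely, I would first recall
$$f(\eta,\eta')=\frac{1-\frac{1}{\eta}-\frac{1}{\eta'}-\frac{1}{\eta\eta'}}{(1+1/\eta)(1+1/\eta')},$$
and
$$g(\eta,\eta')=\frac{1}{1+1/\eta}\cdot\frac{1}{1+1/\eta'}.$$
Under the assumption $1/\eta'\approx 0$, in the expression for $f$ the numerator terms $1/\eta'$ and $1/(\eta\eta')$ vanish and the factor $1+1/\eta'$ in the denominator collapses to $1$, leaving $\frac{1-1/\eta}{1+1/\eta}$. The same substitution turns the second factor of $g$ into $1$, leaving $\frac{1}{1+1/\eta}$. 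Both claimed approximations follow immediately.

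There is essentially no technical obstacle here; the observation is a one-line substitution into expressions already in hand. The only point worth making explicit, if one wanted to be quantitative rather than merely asymptotic, is that the errors $|f(\eta,\eta')-\frac{1-1/\eta}{1+1/\eta}|$ and $|g(\eta,\eta')-\frac{1}{1+1/\eta}|$ can be bounded by a quantity of order $1/\eta'$ (with coefficients depending on $\eta$) via a direct algebraic manipulation. Combined with the empirical finding in Figure~\ref{fig:dist:workload} that $\eta'$ has median on the order of $10^3$ and mean on the order of $10^7$, this justifies treating the approximations as effectively exact for the workloads studied, and motivates the subsequent analysis focusing on $\eta$ alone.
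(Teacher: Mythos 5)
Your proposal is correct and follows the same route as the paper: the paper also obtains this observation by simply dropping the $1/\eta'$ terms from the second (reciprocal) forms of $f(\eta,\eta')$ and $g(\eta,\eta')$ given in Theorem~\ref{theorem:lower-bound} and Corollary~\ref{corollary:lower-bound-positive}. Your added remark about bounding the approximation error by a quantity of order $1/\eta'$ is a reasonable quantitative refinement that the paper does not spell out, but it is not needed for the claim.
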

Define $f(\eta)=\frac{1-1/\eta}{1+1/\eta}$ and $g(\eta)=\frac{1}{1+1/\eta}$.
Figure~\ref{fig:lower-bound-functions} depicts these two functions with $\eta$ increasing from $1$ to $50$.
We observe that both functions increase quickly when $\eta$ grows.
For example, when $\eta=10$, $f(\eta)=0.81$ and $g(\eta)=0.91$.
When $\eta=18.8$ (i.e., the median we observed on our workloads), we have $f(\eta)=0.90$ and $g(\eta)=0.95$.

\begin{figure}[t]
\centering
    \includegraphics[width=0.8\columnwidth]{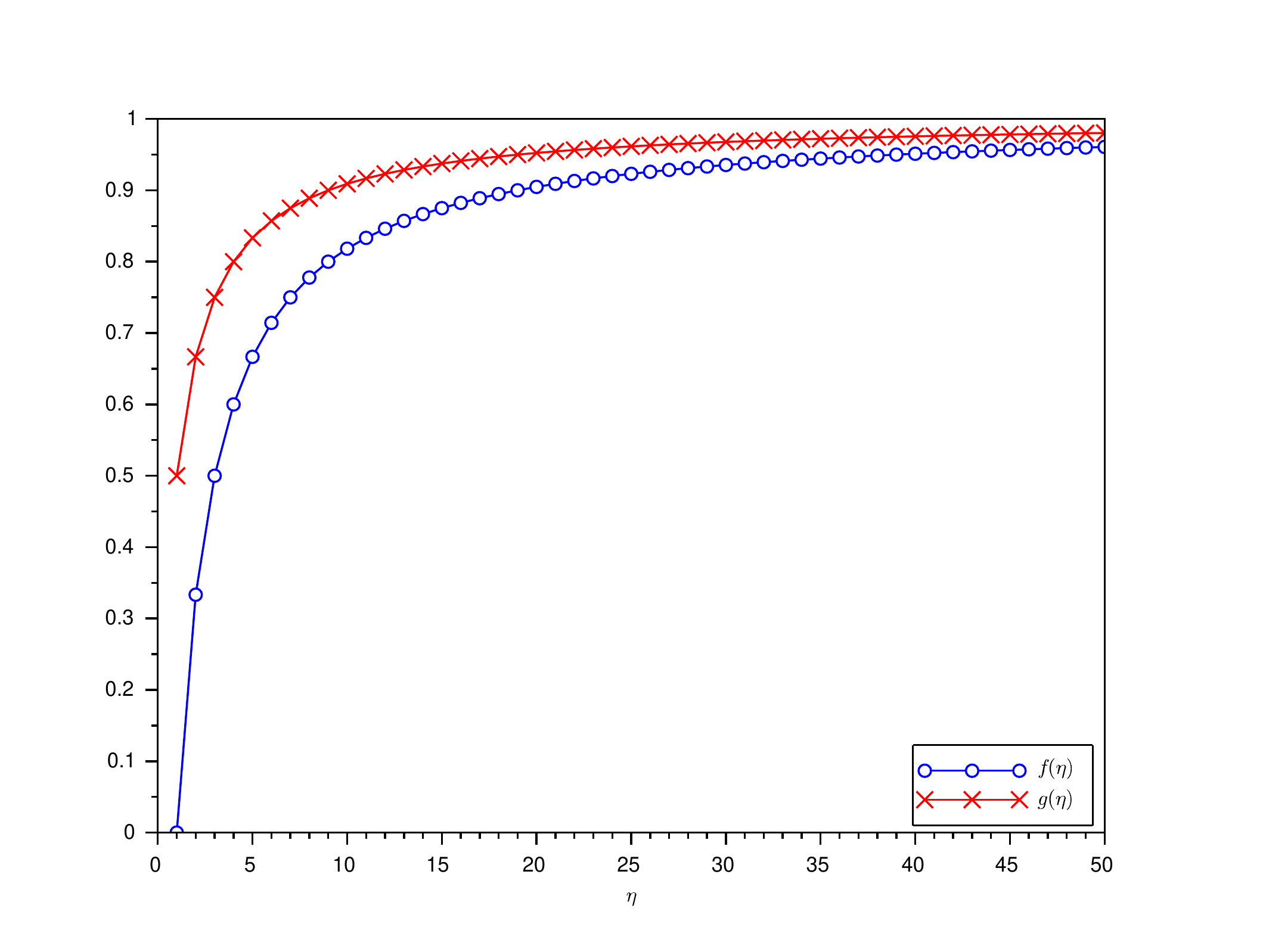}
\vskip -2ex
\caption{Plots of $f(\eta)$ and $g(\eta)$ with the growth of $\eta$.}
\label{fig:lower-bound-functions}
\vskip -2ex
\end{figure}

Under the assumption that $\frac{1}{\eta'}\approx 0$, we can have a more detailed analysis as we will see next.

\subsubsection{The Case When $\eta'$ Is Very Large}

Dividing the numerator and denominator in Equation~\ref{eq:rho} by $\eta'$ and using $\frac{1}{\eta'}\approx 0$, we obtain the following:
\begin{equation}\label{eq:rho-approx}
    \rho\approx\frac{\eta+\alpha}{\sqrt{\eta^2+2\alpha\eta+1}}=\frac{1+\alpha/\eta}{\sqrt{1+2\alpha/\eta+(1/\eta)^2}}.%=\sqrt{\frac{1}{1+\frac{1-\alpha^2}{(\eta+\alpha)^2}}}.
\end{equation}
Again, if $\eta$ is sufficiently large, then $1/\eta\approx 0$ and thus $\rho\approx 1$.
We next view $\rho$ as a function of $\eta$ and $\alpha$. % i.e., $\rho=h(\eta,\alpha)$.
%Figure plots $h(\eta,\alpha)$ with respect to both $\eta$ and $\alpha$.

\begin{lemma}\label{lemma:eta_0}
Assume that Equation~\ref{eq:rho-approx} holds and $\eta+\alpha>0$. For a given $0<\epsilon<1$, there exists some $\eta_0$ s.t. if $\eta>\eta_0$ then $\rho>1-\epsilon$. Specifically,
\begin{equation}\label{eq:eta-0}
  \eta_0=\sqrt{\frac{1-\alpha^2}{1/(1-\epsilon)^2-1}}-\alpha.
\end{equation}
\end{lemma}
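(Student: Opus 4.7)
The plan is to solve the inequality $\rho > 1-\epsilon$ for $\eta$ directly, using the closed-form approximation in Equation~\ref{eq:rho-approx}, and check that the resulting threshold agrees with the claimed $\eta_0$. Since the lemma only asks for an existence statement plus an explicit formula, no clever trick is needed; the work is a careful algebraic inversion together with a brief monotonicity argument.

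First I would observe that under the hypothesis $\eta+\alpha>0$ the right-hand side of Equation~\ref{eq:rho-approx} is strictly positive. Together with $0<1-\epsilon<1$, this legitimates squaring the desired inequality $(\eta+\alpha)/\sqrt{\eta^2+2\alpha\eta+1}>1-\epsilon$ on both sides. Using the identity $\eta^2+2\alpha\eta+1=(\eta+\alpha)^2+(1-\alpha^2)$, the squared inequality can be rewritten as
\begin{equation*}
\frac{(\eta+\alpha)^2}{(\eta+\alpha)^2+(1-\alpha^2)} > (1-\epsilon)^2.
\end{equation*}
Since $1-(1-\epsilon)^2>0$, clearing denominators and collecting terms gives the equivalent bound
\begin{equation*}
(\eta+\alpha)^2 > \frac{(1-\alpha^2)(1-\epsilon)^2}{1-(1-\epsilon)^2} = \frac{1-\alpha^2}{1/(1-\epsilon)^2-1}.
\end{equation*}
Because $|\alpha|\leq 1$ the right-hand side is nonnegative, so taking square roots is legal, and because $\eta+\alpha>0$ the square root of the left-hand side is just $\eta+\alpha$. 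Subtracting $\alpha$ from both sides then yields $\eta>\eta_0$ with $\eta_0$ exactly as stated in Equation~\ref{eq:eta-0}.

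To finish, I would note that each manipulation above is an equivalence under the standing hypotheses, so $\eta>\eta_0$ is not merely sufficient but also necessary for $\rho>1-\epsilon$ in this regime; alternatively one can appeal directly to monotonicity, since writing $\rho^2=1/\bigl(1+(1-\alpha^2)/(\eta+\alpha)^2\bigr)$ shows that $\rho$ is an increasing function of $\eta+\alpha$ when $\eta+\alpha>0$, so crossing the threshold $\eta_0$ from below flips the inequality in exactly one place. The only genuine subtlety—and thus the ``hard part''—is bookkeeping the signs: verifying that $1-(1-\epsilon)^2>0$ to divide without flipping, that $1-\alpha^2\geq 0$ to take a real square root, and that $\eta+\alpha>0$ to recover $\eta>\eta_0$ rather than $|\eta+\alpha|>|\eta_0+\alpha|$. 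Once these are in place the derivation is purely mechanical and yields the explicit formula in one pass.
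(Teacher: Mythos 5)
Your proposal is correct and follows essentially the same route as the paper: square the inequality $\rho>1-\epsilon$ using $\eta+\alpha>0$, rewrite $\eta^2+2\alpha\eta+1$ as $(\eta+\alpha)^2+(1-\alpha^2)$, rearrange, and take square roots. Your explicit remark that every step is an equivalence is a small but genuine improvement in rigor, since the paper's proof is written in the direction ``$\rho>1-\epsilon$ implies $\eta>\eta_0$'' and silently relies on reversibility to obtain the stated implication.
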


\begin{proof}
Using Equation~\ref{eq:rho-approx}, $\rho>1-\epsilon$ implies
$$\eta+\alpha>(1-\epsilon)\cdot\sqrt{\eta^2+2\alpha\eta+1}.$$
Given that $\eta+\alpha>0$, it follows that
$$(\eta+\alpha)^2>(1-\epsilon)^2\cdot\big((\eta+\alpha)^2+(1-\alpha^2)\big).$$
Since $0<1-\epsilon<1$, we have $1/(1-\epsilon)^2>1$. As a result,
$$(\eta+\alpha)^2>\frac{1-\alpha^2}{1/(1-\epsilon)^2-1}.$$
Since $\eta+\alpha>0$, taking the square root completes the proof.
\end{proof}

Lemma~\ref{lemma:eta_0} suggests that there is a minimum $\eta_0$ such that $\rho$ can be sufficiently high as long as $\eta>\eta_0$.
We study two examples below:
\begin{itemize}
    \item If $\epsilon=0.05$, i.e., we want to have $\rho>1-\epsilon=0.95$. As a result,
$\eta_0=\sqrt{(1-\alpha^2)/0.108}-\alpha$.
    \item If $\epsilon=0.01$, i.e., we want to have $\rho>1-\epsilon=0.99$. As a result,
$\eta_0=\sqrt{(1-\alpha^2)/0.0203}-\alpha$.
\end{itemize}

\begin{figure}[t]
\centering
    \includegraphics[width=0.8\columnwidth]{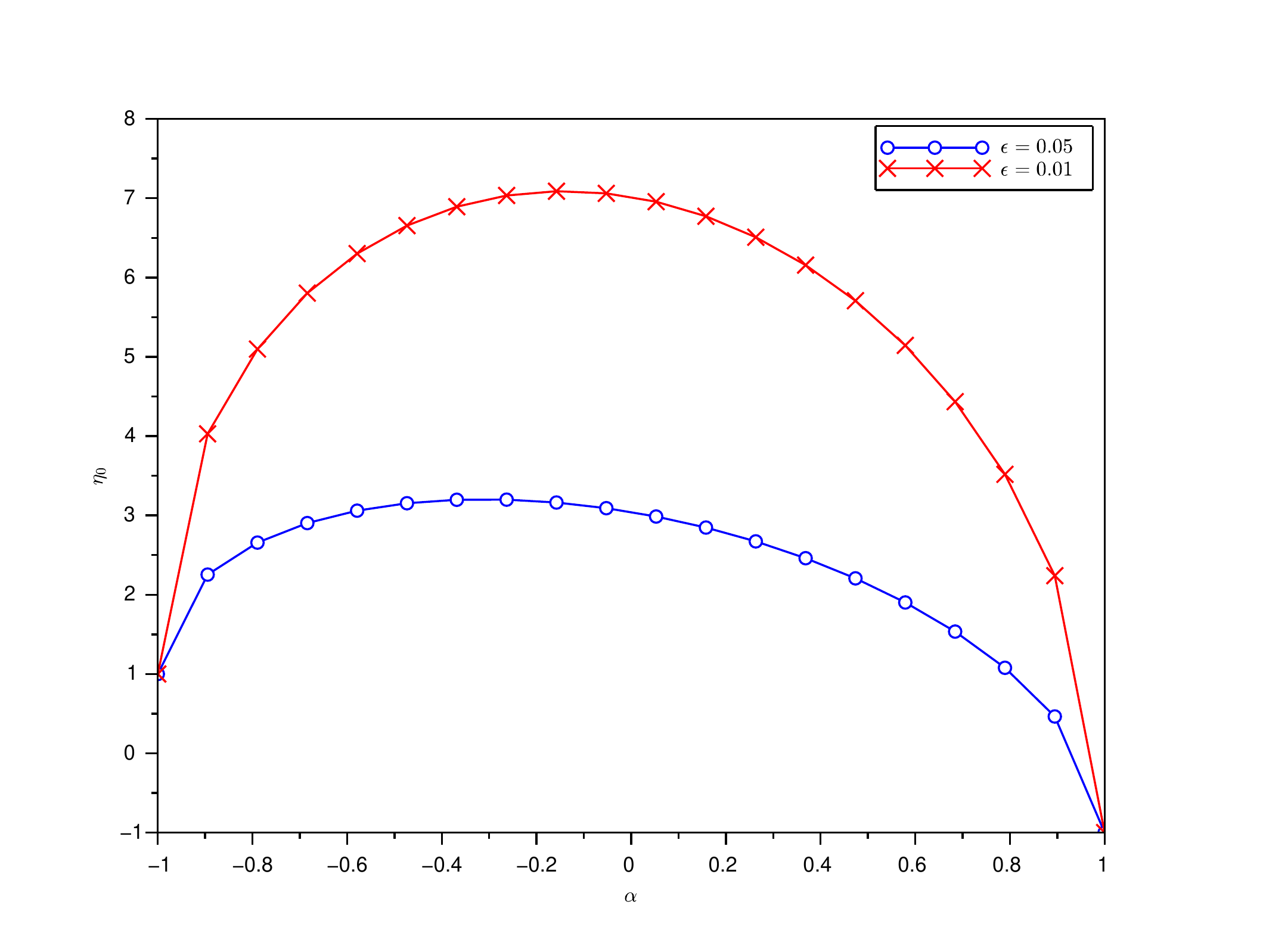}
\vskip -2ex
\caption{Plots of $\eta_0$ as a function of $\alpha$.}
\label{fig:eta-0}
\vskip -2ex
\end{figure}

Figure~\ref{fig:eta-0} plots the $\eta_0$ as a function of $-1\leq\alpha\leq 1$ in the above two cases.
Apparently, $\eta_0$ has a maximum $\eta_0^{\max}$ with $-1\leq\alpha\leq 1$.
As long as $\eta>\eta_0^{\max}$, we will have $\rho>1-\epsilon$ regardless of $\alpha$.
In fact, this is easy to prove using Equation~\ref{eq:eta-0}.
Specifically we have the following theorem.
\begin{theorem}\label{theorem:eta-max}
$\eta_0$ achieves its maximum $\eta_0^{\max}$ when
$$\alpha=-\sqrt{1-(1-\epsilon)^2}.$$
In more detail, we have
\begin{equation}\label{eq:eta-max}
  \eta_0^{\max}=\frac{1}{\sqrt{1-(1-\epsilon)^2}}.
\end{equation}
\end{theorem}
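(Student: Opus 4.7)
The plan is to treat $\eta_0$ as a single-variable function of $\alpha$ on $[-1,1]$ via Equation~\ref{eq:eta-0} and find its critical point by standard calculus. Let me abbreviate $c=1/(1-\epsilon)^2-1$, which is a positive constant since $0<\epsilon<1$, so that
$$\eta_0(\alpha)=\sqrt{\tfrac{1-\alpha^2}{c}}-\alpha.$$
First I would differentiate with respect to $\alpha$ and set the derivative to zero, obtaining an equation of the form $-\alpha=\sqrt{c}\,\sqrt{1-\alpha^2}$. The sign of the left-hand side immediately forces $\alpha\leq 0$, and squaring yields $\alpha^2(1+c)=c$, i.e., $\alpha^2=c/(1+c)$.

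Next I would simplify $c/(1+c)$ using $1+c=1/(1-\epsilon)^2$, which gives $c/(1+c)=1-(1-\epsilon)^2$, so the unique critical point in $[-1,0]$ is exactly $\alpha^*=-\sqrt{1-(1-\epsilon)^2}$ as claimed. To confirm this is a maximum rather than a minimum, I would note that $\eta_0(\alpha)\to -1$ as $\alpha\to 1^-$ and $\eta_0(\alpha)\to 1$ as $\alpha\to -1^+$, both of which are dominated by the value of $\eta_0$ at $\alpha^*$ (which I will show is at least $1$). Alternatively one can check the sign of $\eta_0'(\alpha)$ on either side of $\alpha^*$: the derivative is positive for $\alpha<\alpha^*$ and negative for $\alpha>\alpha^*$ (the first term $-\alpha/(\sqrt{c}\sqrt{1-\alpha^2})$ is decreasing in $\alpha$ on the relevant range).

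Finally I would substitute $\alpha^*$ back into $\eta_0(\alpha)$. At this $\alpha^*$ we have $1-\alpha^{*2}=(1-\epsilon)^2$ and $\sqrt{c}=\sqrt{1-(1-\epsilon)^2}/(1-\epsilon)$, so
$$\sqrt{\tfrac{1-\alpha^{*2}}{c}}=\frac{(1-\epsilon)^2}{\sqrt{1-(1-\epsilon)^2}}.$$
Adding $-\alpha^*=\sqrt{1-(1-\epsilon)^2}$ and combining the two terms over the common denominator $\sqrt{1-(1-\epsilon)^2}$, the numerator collapses to $(1-\epsilon)^2+\bigl(1-(1-\epsilon)^2\bigr)=1$, yielding Equation~\ref{eq:eta-max}. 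The only mildly tricky step is this last algebraic collapse; everything else is routine differentiation. The premise $\eta+\alpha>0$ from Lemma~\ref{lemma:eta_0} should also be checked at the maximum, but since $\eta_0^{\max}\geq 1>\sqrt{1-(1-\epsilon)^2}=|\alpha^*|$, the assumption is automatically satisfied for any $\eta>\eta_0^{\max}$.
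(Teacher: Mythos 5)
Your proposal is correct and follows essentially the same route as the paper: differentiate $\eta_0(\alpha)$, solve the critical-point equation to get $\alpha^*=-\sqrt{1-(1-\epsilon)^2}$, and substitute back to obtain $\eta_0^{\max}=1/\sqrt{1-(1-\epsilon)^2}$. The only cosmetic difference is that the paper certifies the maximum via the second derivative (global concavity of $\eta_0$ on $[-1,1]$), whereas you use boundary values together with the sign change of $\eta_0'$, which amounts to the same observation.
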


\begin{proof}
We can view $\eta_0$ as a function of $\alpha$, i.e., $\eta_0=\eta_0(\alpha)$.
Define a constant $C=\frac{1}{\sqrt{1/(1-\epsilon)^2-1}}$. By Equation~\ref{eq:eta-0}, we have
$$\eta_0(\alpha)=C\sqrt{1-\alpha^2}-\alpha.$$
Taking derivatives of $\eta_0(\alpha)$, we obtain
$$\eta'_0(\alpha)=-\frac{C\alpha}{\sqrt{1-\alpha^2}}-1,\quad \eta''_0(\alpha)=-\frac{C}{(1-\alpha^2)^{3/2}}.$$
Since $C>0$ and $|\alpha|\leq 1$, we have $\eta''_0(\alpha)<0$.
Therefore, $\eta_0$ achieves its maximum when $\eta'_0(\alpha)=0$.
Letting $\eta'_0(\alpha)=0$ gives
\begin{equation}\label{eq:alpha-max}
  \alpha=-\frac{1}{\sqrt{C^2+1}}=-\sqrt{1-(1-\epsilon)^2}.
\end{equation}
Substituting Equation~\ref{eq:alpha-max} into Equation~\ref{eq:eta-0} gives
\begin{eqnarray*}
\eta_0^{\max}&=&\sqrt{1-(1-\epsilon)^2}+\frac{(1-\epsilon)^2}{\sqrt{1-(1-\epsilon)^2}}\\
&=&\frac{1}{\sqrt{1-(1-\epsilon)^2}}.
\end{eqnarray*}
This completes the proof of the theorem.
\end{proof}

Continuing with the previous examples, by Theorem~\ref{theorem:eta-max} we have
\begin{itemize}
    \item For $\epsilon=0.05$, $\eta_0^{\max}=3.2$ when $\alpha=-0.31$;
    \item For $\epsilon=0.01$, $\eta_0^{\max}=7.1$ when $\alpha=-0.14$.
\end{itemize}
These results can be easily verified in Figure~\ref{fig:eta-0}.
Moreover, by Equation~\ref{eq:alpha-max}, we have $\alpha\to 0$ as $\epsilon\to 0$.
Meanwhile, $\eta_0^{\max}$ increases as $\epsilon$ decreases.
In particular, $\eta_0^{\max}\to\infty$ as $\epsilon\to 0$.
Figure~\ref{fig:eta-0-max} further plots $\eta_0^{\max}$ with respect to $\epsilon$.

\begin{figure}[t]
\centering
    \includegraphics[width=0.8\columnwidth]{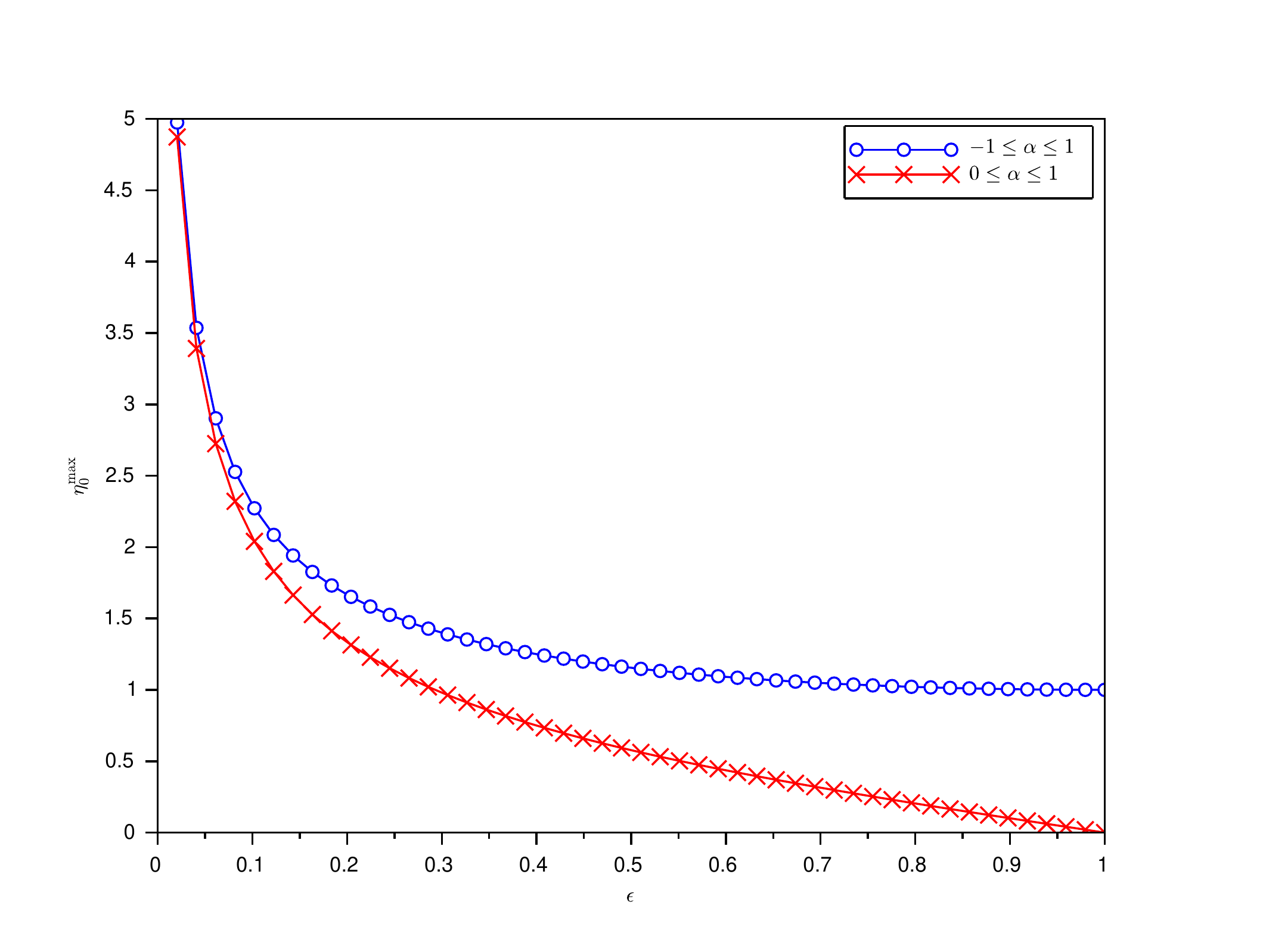}
\vskip -2ex
\caption{Plots of $\eta_0^{\max}$ as a function of $\epsilon$.}
\label{fig:eta-0-max}
\vskip -2ex
\end{figure}

So far we have focused on the general case where $-1\leq\alpha\leq 1$.
In practice, it is reasonable to assume a positive $\alpha$, i.e., $0\leq\alpha\leq 1$.
(For the 36 workloads we studied in Section~\ref{sec:analysis:workloads}, we observed only one workload with a negative $\alpha=-0.09$.)
Therefore, similar to Corollary~\ref{corollary:lower-bound-positive}, we can improve the result given by Theorem~\ref{theorem:eta-max} for the case when $0\leq\alpha\leq 1$.

\begin{corollary}\label{corollary:eta-max-p}
If $0\leq\alpha\leq 1$, $\eta_0$ achieves its maximum $\eta_0^{\max,p}$ when $\alpha=0$ (the superscript $p$ indicates a \emph{positive} $\alpha$):
\begin{equation}\label{eq:eta-max-p}
    \eta_0^{\max,p}=\frac{1-\epsilon}{\sqrt{1-(1-\epsilon)^2}}.
\end{equation}
\end{corollary}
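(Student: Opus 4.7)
The proof will reuse the machinery already set up in the proof of Theorem~\ref{theorem:eta-max}. There we wrote $\eta_0$ as a function of $\alpha$ in the explicit form
\begin{equation*}
\eta_0(\alpha) = C\sqrt{1-\alpha^2} - \alpha, \qquad C = \frac{1}{\sqrt{1/(1-\epsilon)^2 - 1}},
\end{equation*}
and computed
\begin{equation*}
\eta_0'(\alpha) = -\frac{C\alpha}{\sqrt{1-\alpha^2}} - 1.
\end{equation*}
My plan is to simply restrict this analysis to $\alpha \in [0,1]$ and show that on this sub-interval the maximizing $\alpha$ from Theorem~\ref{theorem:eta-max} is no longer feasible, so the maximum is pushed to the boundary $\alpha=0$.

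Concretely, I would argue as follows. Since $C > 0$ and $\alpha \geq 0$, the first summand in $\eta_0'(\alpha)$ is nonpositive, and the second is $-1$, so $\eta_0'(\alpha) \leq -1 < 0$ throughout $[0,1)$. Hence $\eta_0(\alpha)$ is strictly decreasing on $[0,1]$ and its maximum over this interval is achieved at $\alpha = 0$. Substituting gives $\eta_0(0) = C$, and rewriting
\begin{equation*}
C = \frac{1}{\sqrt{\tfrac{1-(1-\epsilon)^2}{(1-\epsilon)^2}}} = \frac{1-\epsilon}{\sqrt{1-(1-\epsilon)^2}},
\end{equation*}
yields exactly Equation~\ref{eq:eta-max-p}.

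There is essentially no hard step here; the ``obstacle,'' if any, is only the conceptual one of recognizing that Theorem~\ref{theorem:eta-max} located the unconstrained critical point at $\alpha^* = -\sqrt{1-(1-\epsilon)^2} < 0$, which lies outside the feasible region $[0,1]$. Concavity of $\eta_0$ (established in Theorem~\ref{theorem:eta-max} via $\eta_0''(\alpha) < 0$) then guarantees that $\eta_0$ is monotonically decreasing to the right of $\alpha^*$, so in particular on $[0,1]$, which is the only fact needed to collapse the constrained optimization to its left endpoint.
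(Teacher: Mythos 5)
Your proof is correct and follows essentially the same route as the paper: both use the derivative $\eta_0'(\alpha)=-\frac{C\alpha}{\sqrt{1-\alpha^2}}-1$ from the proof of Theorem~\ref{theorem:eta-max}, observe it is negative for $\alpha\geq 0$, conclude $\eta_0$ is decreasing on $[0,1]$ so the maximum sits at $\alpha=0$, and then simplify $\eta_0(0)=C$ to Equation~\ref{eq:eta-max-p}. Your added remark about the unconstrained critical point lying outside $[0,1]$ is a nice bit of context but not needed.
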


\begin{proof}
By the proof of Theorem~\ref{theorem:eta-max}, we have
$$\eta'_0(\alpha)=-\frac{C\alpha}{\sqrt{1-\alpha^2}}-1, \quad\text{where } C>0.$$
If $\alpha\geq 0$, we have $\eta'_0(\alpha)<0$.
Therefore, $\eta_0(\alpha)$ is a decreasing function of $\alpha$.
As a result, $\eta_0$ achieves its maximum when $\alpha=0$.
Setting $\alpha=0$ in Equation~\ref{eq:eta-0} gives Equation~\ref{eq:eta-max-p}.
\end{proof}

Comparing Equation~\ref{eq:eta-max-p} with Equation~\ref{eq:eta-max} suggests that $\eta_0^{\max,p}<\eta_0^{\max}$.
This means that in the case of a positive $\alpha$, which is the common case in practice, one can have a less stringent requirement on $\eta$ to expect a high $\rho$.
Figure~\ref{fig:eta-0-max} illustrates this difference.
When $\epsilon\to 0$, however, $\eta_0^{\max,p}\to\eta_0^{\max}$.

\subsubsection{Summary and Discussion}

In this section, we started by a formal analysis of Algorithm~\ref{alg:combine-costs}.
We focus on studying the correlation between the cost estimate returned by Algorithm~\ref{alg:combine-costs} and the actual execution cost.
We developed a lower bound for the correlation (Theorem~\ref{theorem:lower-bound} and Corollary~\ref{corollary:lower-bound-positive}) that only depends on $\eta$ and $\eta'$, two quantities determined by workload-level properties.
We then studied many real workloads in the context of index tuning and found that $\eta'$ is typically very large and thus we only need to focus on the part that depends on $\eta$.
We further performed a more detailed analysis in the presence of very large $\eta'$, and found a lower bound for $\eta$ for a given level of correlation we want to achieve (Theorem~\ref{theorem:eta-max} and Corollary~\ref{corollary:eta-max-p}).

\begin{figure}[t]
\centering
\subfigure[Distribution of Pearson CC using optimizer's estimates: mean = $0.54$, median = $0.56$.]{ \label{fig:dist:cost-pearson}
\includegraphics[width=0.9\columnwidth]{./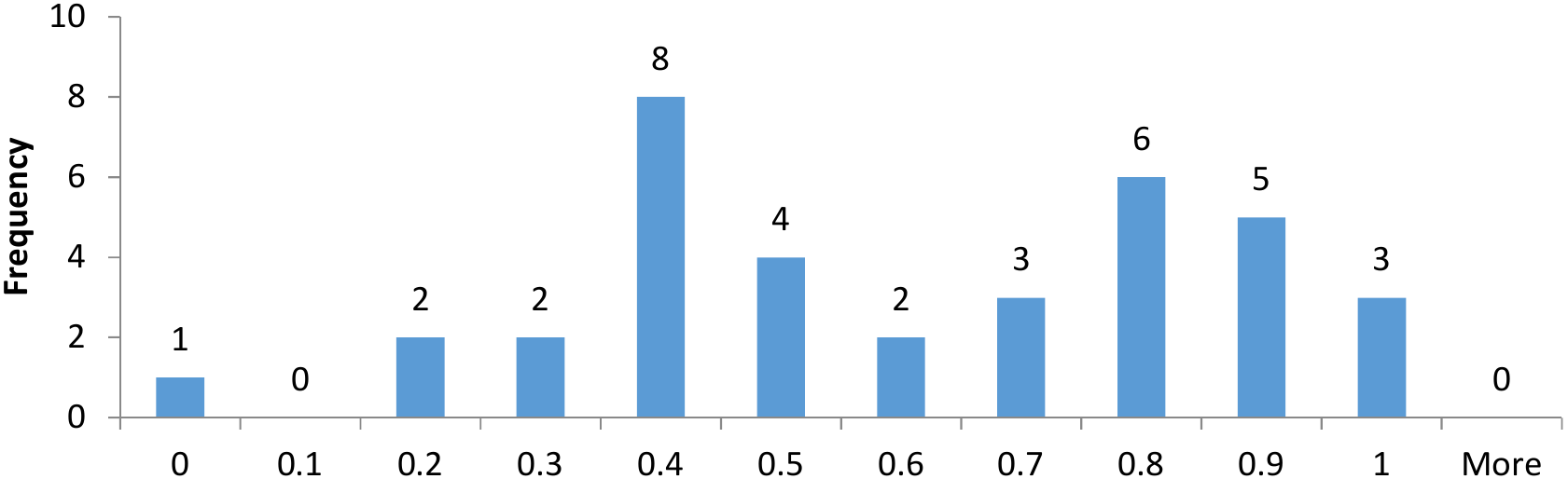}}
\subfigure[Distribution of Pearson CC using Algorithm~\ref{alg:combine-costs} (i.e., $\rho$): mean = $0.81$, median = $0.82$.]{ \label{fig:dist:recost-pearson}
\includegraphics[width=0.9\columnwidth]{./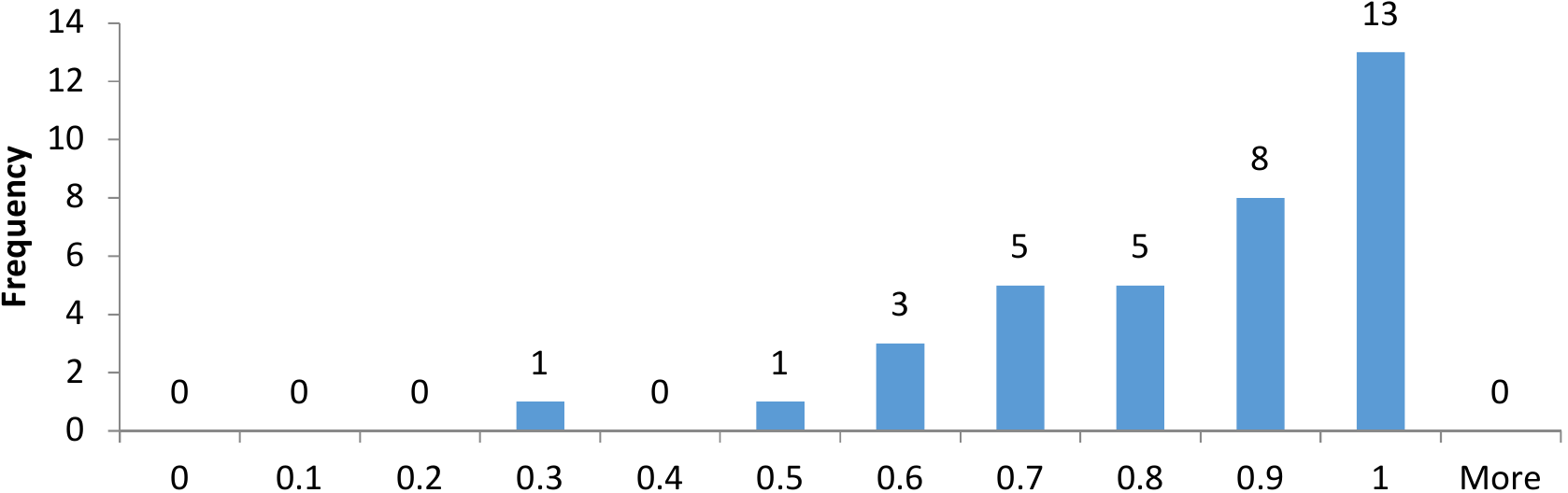}}
\subfigure[Distribution of Spearman CC using optimizer's estimates: mean = $0.53$, median = $0.62$.]{ \label{fig:dist:cost-spearman}
\includegraphics[width=0.9\columnwidth]{./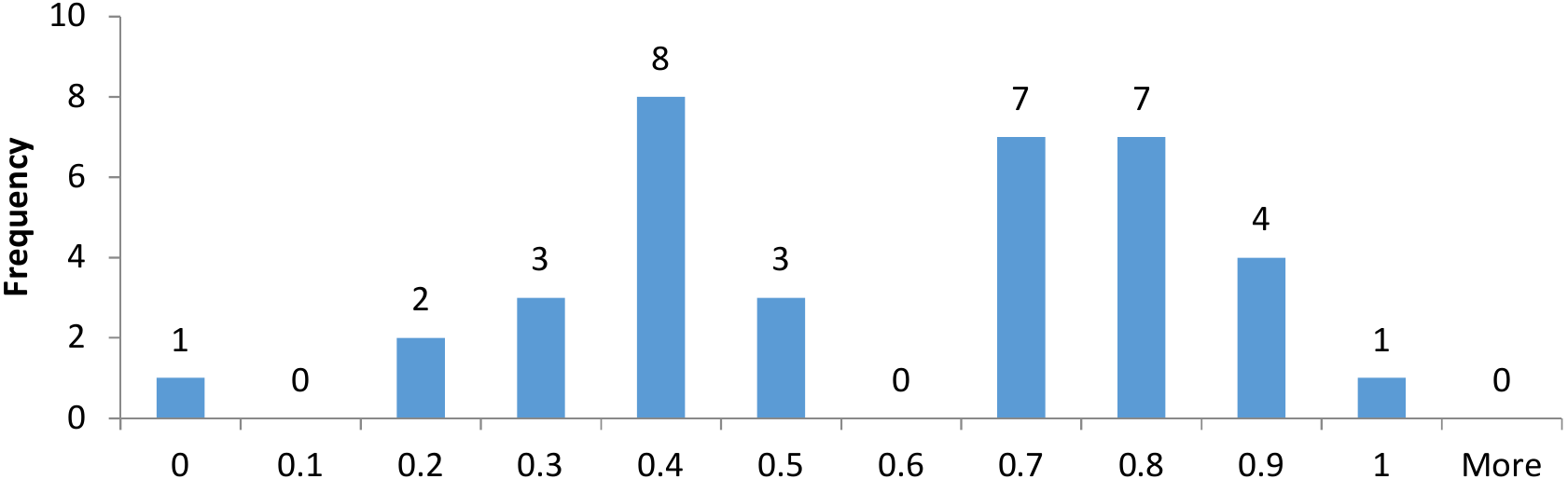}}
\subfigure[Distribution of Spearman CC using Algorithm~\ref{alg:combine-costs}: mean = $0.78$, median = $0.80$.]{ \label{fig:dist:recost-spearman}
\includegraphics[width=0.9\columnwidth]{./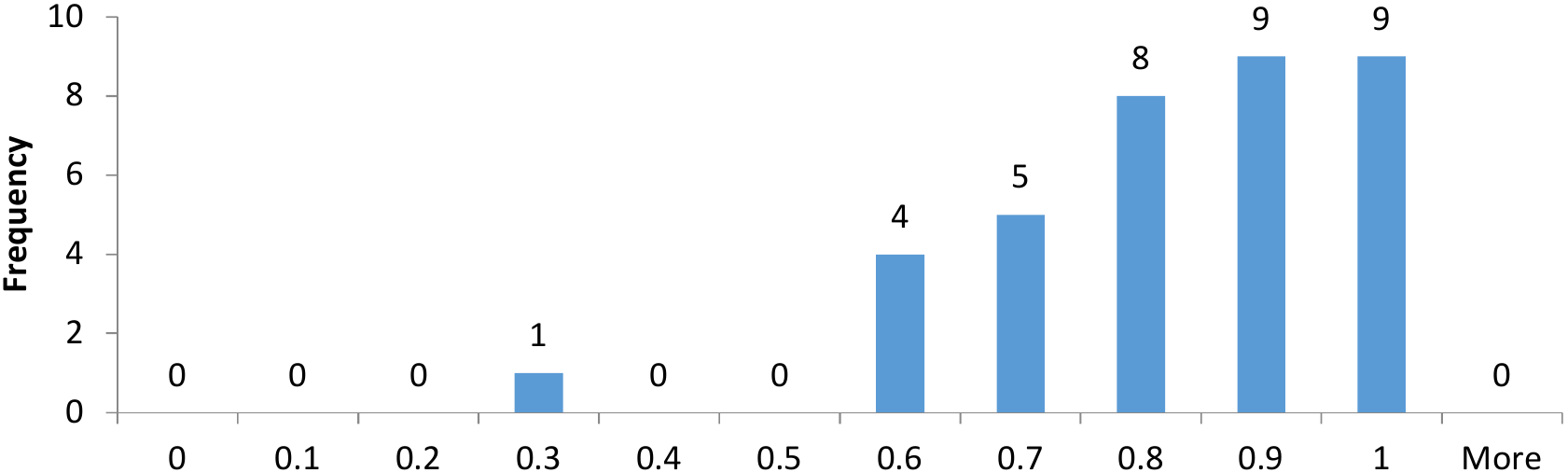}}
\vspace{-1em}
\caption{The distributions of Pearson CC and Spearman CC on real workloads using optimizer's estimates vs. Algorithm~\ref{alg:combine-costs}.}.
\label{fig:dist:cc}
\vspace{-3em}
\end{figure}

As was shown in Figure~\ref{fig:dist:workload}, there is huge variance in the distribution of $\eta$ on real workloads.
Although 25 out of the 36 workloads have $\eta\geq 10$, there are still 11 workloads with relatively small $\eta$.
So a natural question is that how large $\rho$ is over these real workloads.
In Figure~\ref{fig:dist:cc}, we present the distributions of both Pearson CC and Spearman CC on the 36 real workloads.\footnote{Spearman CC is the rank-based version of Pearson CC. Compared to Pearson CC, Spearman CC is more robust when there are outliers, but it ignores the relative differences between costs.}
Comparing with optimizer's cost estimates, the cost estimates returned by Algorithm~\ref{alg:combine-costs} improve the correlation coefficients by $0.55$ to $0.80$ on average.

\vspace{-0.5em}
\paragraph*{Discussion}

Given that $\eta$ plays an important role in determining $\rho$, we further analyze its impact in more detail.
By Equation~\ref{eq:rho-approx}, for a given $\alpha$, we can view $\rho$ as a function of $\eta$, namely, $\rho=\rho(\eta)$.
We have the following simple result.
\begin{lemma}\label{lemma:rho-eta-non-decreasing}
Assume that Equation~\ref{eq:rho-approx} holds. For a given $-1\leq\alpha\leq 1$, $\rho$ is then a non-decreasing function of $\eta$ ( $0\leq\eta\leq\infty$).
\end{lemma}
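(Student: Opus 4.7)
The plan is to verify the claim by straightforward differentiation of the closed form in Equation~\ref{eq:rho-approx}. Writing
$$\rho(\eta)=\frac{\eta+\alpha}{\sqrt{\eta^2+2\alpha\eta+1}},$$
I would treat $\alpha\in[-1,1]$ as a fixed parameter and view $\rho$ as a function of $\eta\ge 0$. The first step is simply to apply the quotient rule to compute $\rho'(\eta)$.

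The second step is algebraic simplification. After differentiating, the numerator should be
$$\sqrt{\eta^2+2\alpha\eta+1}-(\eta+\alpha)\cdot\frac{\eta+\alpha}{\sqrt{\eta^2+2\alpha\eta+1}},$$
which, after clearing the inner denominator, collapses to $(\eta^2+2\alpha\eta+1)-(\eta+\alpha)^2=1-\alpha^2$. Hence
$$\rho'(\eta)=\frac{1-\alpha^2}{(\eta^2+2\alpha\eta+1)^{3/2}}.$$
Since $|\alpha|\le 1$, the numerator is nonnegative. For the denominator, I would observe that $\eta^2+2\alpha\eta+1=(\eta+\alpha)^2+(1-\alpha^2)\ge 0$, and it is strictly positive unless simultaneously $\alpha=\pm 1$ and $\eta=\mp\alpha$. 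Therefore $\rho'(\eta)\ge 0$ wherever defined, proving $\rho$ is non-decreasing in $\eta$.

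The remaining detail is the boundary case $\alpha=\pm 1$. Here $1-\alpha^2=0$, so $\rho'(\eta)\equiv 0$ and $\rho(\eta)$ is constant (indeed, one can verify $\rho\equiv\mathrm{sign}(\eta+\alpha)$), which is trivially non-decreasing. There is no real obstacle in this proof; it is a direct calculus computation. The only thing to be careful about is keeping track of signs so that ``$\ge$'' rather than ``$>$'' is asserted, since the problem only claims non-decreasing monotonicity, not strict monotonicity.
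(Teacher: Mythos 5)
Your proposal is correct and follows the same route as the paper's proof: differentiate Equation~\ref{eq:rho-approx} to obtain $\rho'(\eta)=\frac{1-\alpha^2}{(\eta^2+2\alpha\eta+1)^{3/2}}\geq 0$ and conclude. Your extra care with the boundary case $\alpha=\pm 1$ is a minor refinement but not a different argument.
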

\begin{proof}
Taking the derivative for $\rho$ (Equation~\ref{eq:rho-approx}), we obtain
$$\rho'(\eta)=\frac{1-\alpha^2}{w^3}, \quad\text{where } w=\sqrt{\eta^2+2\alpha\eta+1}.$$
Since $|\alpha|\leq 1$ and $w>0$, $\rho'(\eta)\geq 0$.
Therefore $\rho$ is a non-decreasing function of $\eta$.
\end{proof}

\begin{theorem}\label{theorem:rho-bounds}
Assuming Equation~\ref{eq:rho-approx} holds, $\alpha\leq\rho\leq 1$.
\end{theorem}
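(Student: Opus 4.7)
The plan is to piggyback directly on Lemma~\ref{lemma:rho-eta-non-decreasing}, which already establishes that $\rho$ viewed as a function of $\eta$ is non-decreasing on $[0,\infty)$ for any fixed $\alpha\in[-1,1]$. Once monotonicity is in hand, the bounds $\alpha\le\rho\le 1$ will fall out of simply evaluating $\rho(\eta)$ at the two endpoints $\eta=0$ and $\eta\to\infty$.

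First I would plug $\eta=0$ directly into Equation~\ref{eq:rho-approx}, which yields
\[
\rho(0)=\frac{0+\alpha}{\sqrt{0+0+1}}=\alpha.
\]
By Lemma~\ref{lemma:rho-eta-non-decreasing}, $\rho(\eta)\ge\rho(0)=\alpha$ for all $\eta\ge 0$, giving the lower bound.

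Next I would compute the limit as $\eta\to\infty$. Using the second form of Equation~\ref{eq:rho-approx},
\[
\lim_{\eta\to\infty}\rho(\eta)=\lim_{\eta\to\infty}\frac{1+\alpha/\eta}{\sqrt{1+2\alpha/\eta+1/\eta^2}}=1.
\]
Because $\rho$ is non-decreasing and approaches $1$ at infinity, it is bounded above by $1$ on $[0,\infty)$. Equivalently, one can verify $\rho\le 1$ algebraically: if $\eta+\alpha\le 0$, the bound is trivial since the denominator is positive; otherwise squaring reduces $\eta+\alpha\le\sqrt{\eta^2+2\alpha\eta+1}$ to $\alpha^2\le 1$, which holds because $\alpha$ is a correlation coefficient.

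Combining the two endpoint arguments yields $\alpha\le\rho\le 1$, completing the proof. There is no real obstacle here; the only subtlety is remembering that Lemma~\ref{lemma:rho-eta-non-decreasing} is what licenses the ``evaluate at endpoints'' argument, and being a little careful about the sign of $\eta+\alpha$ when squaring for the explicit verification of the upper bound.
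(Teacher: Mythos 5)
Your proof is correct and follows essentially the same route as the paper: invoke Lemma~\ref{lemma:rho-eta-non-decreasing} for monotonicity in $\eta$, then read off $\rho(0)=\alpha$ and $\rho(\infty)=1$ as the two endpoint values. The extra direct algebraic check that $\rho\le 1$ (via squaring and reducing to $\alpha^2\le 1$) is a harmless bonus but not needed.
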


\begin{proof}
By Lemma~\ref{lemma:rho-eta-non-decreasing}, $\rho(\eta)$ is a non-decreasing function of $\eta$.
Given that $0\leq\eta <\infty$, we have $\rho(0)\leq\rho(\eta)\leq\rho(\infty)$.
By Equation~\ref{eq:rho-approx}, $\rho(0)=\alpha$ whereas $\rho(\infty)=1$.
This completes the proof of the theorem.
\end{proof}

In particular, when $|\alpha|<1$, $\rho$ is a strictly increasing function of $\eta$.
When $\alpha=1$, $\rho=1$; when $\alpha=-1$, $\rho=1$ if $\eta\geq 1$, otherwise $\rho=-1$.
In any case Theorem~\ref{theorem:rho-bounds} holds.
Figure~\ref{fig:rho-eta} further plots the two functions $\rho(\eta)=\frac{\eta}{\sqrt{\eta^2+1}}$ when $\alpha=0$ and $\rho(\eta)=\frac{\eta+0.5}{\sqrt{\eta^2+\eta+1}}$ when $\alpha = 0.5$.
It is clear that $\rho\geq\alpha$ in both cases.

\begin{figure}[t]
\centering
    \includegraphics[width=0.8\columnwidth]{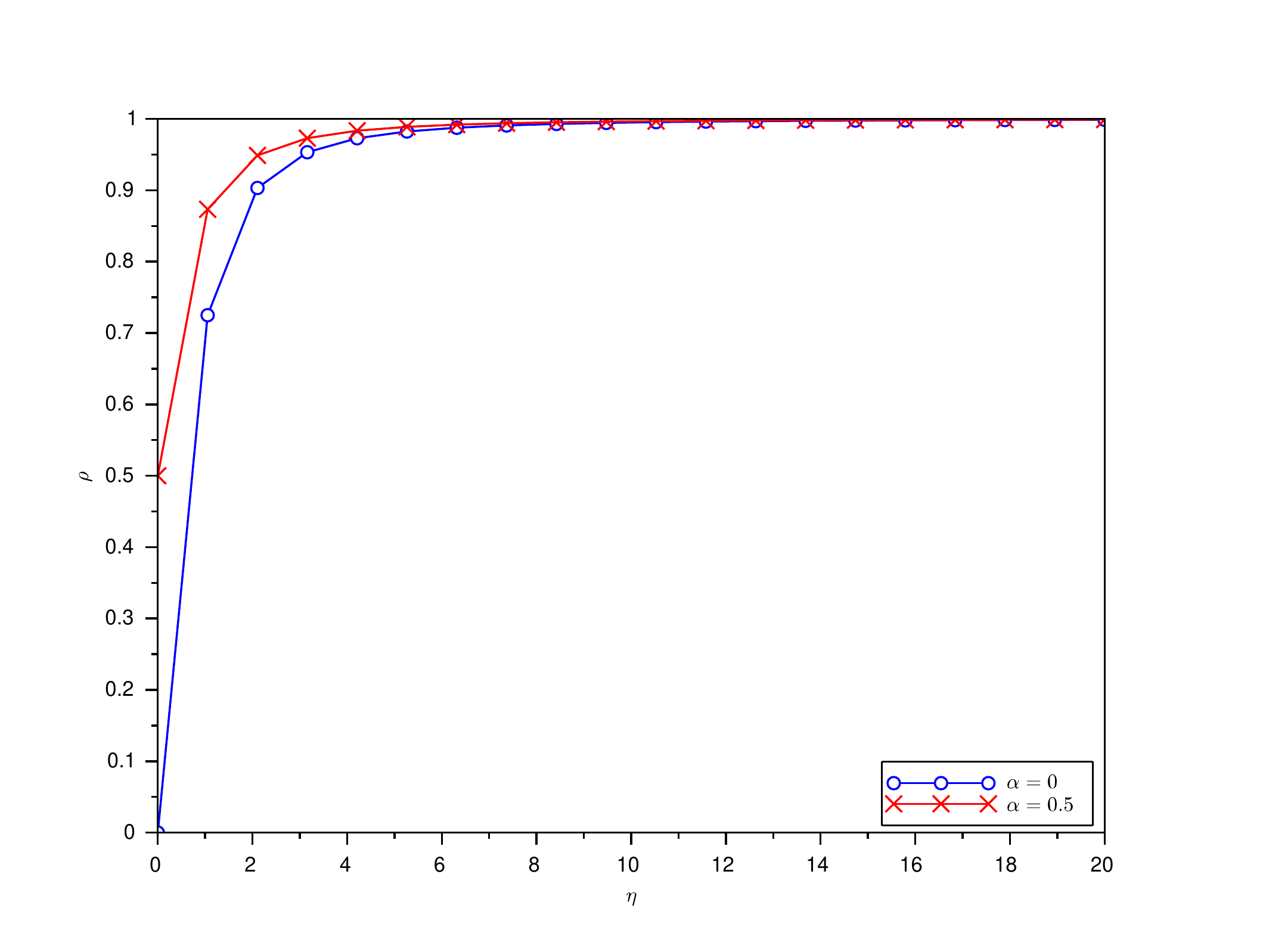}
\vskip -2ex
\caption{Plots of $\rho$ as a function of $\eta$ for a fixed $\alpha$.}
\label{fig:rho-eta}
\vskip -2ex
\end{figure}

In Figure~\ref{fig:summary} we summarize our analysis.
In the presence of a very large $\eta'$, $\rho$ only depends on $\eta$ and $\alpha$.
Given a desired threshold $0<\epsilon<1$, for a given $-1\leq\alpha\leq 1-\epsilon$, along the spectrum $0\leq\eta<\infty$ there exists some $\eta_0$ such that $\rho\geq 1-\epsilon$ when $\eta>\eta_0$.
On the other hand, if $\eta\leq\eta_0$, then a weaker bound for $\rho$ is $\alpha\leq\rho\leq 1-\epsilon$.
Note that the condition $\alpha\leq 1-\epsilon$ is necessary for $\eta_0\geq 0$ (see Equation~\ref{eq:eta-0}).

\begin{figure}[t]
\centering
    \includegraphics[width=0.8\columnwidth]{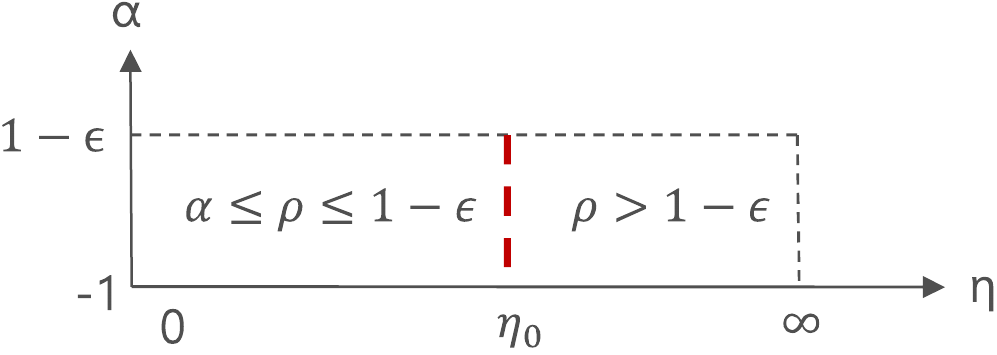}
\vskip -2ex
\caption{Summary of the correlation analysis.}
\label{fig:summary}
\vskip -2ex
\end{figure}

\vspace{-0.5em}
\paragraph*{Remarks}
We have several remarks in order.
First, it is straightforward to extend the analysis to the general case of backbone operators versus the rest (not just leaf operators versus internal ones).
%Second, it is clear that the correlation is not expected to be high when $\eta$ is small.
%This raises the question of how to handle that case.
%We argue that this is unavoidable when there is little execution feedback the set of backbone operators is empty so that $\eta=0$.
%As we accumulate more and more execution feedback, eventually we will arrive at a point where most (if not all) operators have been included into the set of backbone operators (assuming there are no cost modeling errors for backbone operators).
%In that scenario we will have $\eta=\infty$ (if all operators become backbone operators) or a very large $\eta$ (if the set of operators not included is very small).
Second, so far we have focused on the case when $\eta'$ is very large.
This may not hold on certain workloads.
We present more analysis for that situation in Section~\ref{sec:analysis-more}.
Third, so far we have assumed there are no cost modeling errors for backbone operators, which is unlikely the case in practice.
It is straightforward to extend the analysis by considering modeling errors, though the analytic formulas presented in this section will become more complicated.
%For this reason, we omit this extension.

\subsection{The Case When $\eta'$ Is Not Large}\label{sec:analysis-more}

So far we have focused ourselves on the special case when $\eta'$ is very large (more accurately,  $\frac{1}{\eta'}\approx 0$).
One may be also interested in the case when this does not hold.
In the following, we study this case in more detail.

By Equation~\ref{eq:rho}, we can also view $\rho$ as a function of $\eta'$:
\begin{equation}
    \rho=\rho(\eta')=\frac{A(B\eta'+C)}{\sqrt{(\eta')^2+2\beta\eta'+1}},
\end{equation}
where $A=\frac{1}{\sqrt{\eta^2+2\alpha\eta+1}}$, $B=\eta+\alpha$, and $C=\beta\eta+\gamma$.
(As was in Lemma~\ref{lemma:eta_0}, we assume $\eta +\alpha\geq 0$. Note that this automatically holds if $\eta\geq 1$.)
Taking the derivative we obtain
\begin{equation}
    \rho'(\eta')=\frac{ABv^2-u(\eta'+\beta)}{v^3},
\end{equation}
where $u=A(B\eta'+C)$, $v=\sqrt{(\eta')^2+2\beta\eta'+1}$.
Note that the derivative of $v$ satisfies $v'(\eta')=\frac{\eta'+\beta}{v}$.

Now let $\rho'(\eta')=0$. We obtain
\begin{eqnarray}\label{eq:eta-prim-0}
    \eta'_0&=&\frac{\beta C-B}{\beta B-C}\\\nonumber
    &=&\frac{(1-\beta^2)\eta+(\alpha-\beta\gamma)}{\gamma-\alpha\beta}\\\nonumber
    &=&\frac{1-\beta^2}{\gamma-\alpha\beta}\eta+\frac{\alpha-\beta\gamma}{\gamma-\alpha\beta}.
\end{eqnarray}
Using the relation
$$ABv^2|_{\eta'=\eta'_0}=u|_{\eta'=\eta'_0}(\eta'_0+\beta),$$
which gives
$$Bv^2|_{\eta'=\eta'_0}=(B\eta'_0+C)(\eta'_0+\beta),$$
it then follows that
\begin{eqnarray}\label{eq:rho-eta-prim-0}
    \rho(\eta'_0)&=&\frac{A(B\eta'_0+C)}{v|_{\eta'=\eta'_0}}\\\nonumber
    &=&\frac{ABv|_{\eta'=\eta'_0}}{\eta'_0+\beta}\\\nonumber
    &=&\frac{AB\sqrt{(\eta'_0)^2+2\beta\eta'_0+1}}{\eta'_0+\beta}\\\nonumber
    &=&AB\cdot\sqrt{1+\frac{1-\beta^2}{(\eta'_0+\beta)^2}}\\\nonumber
    &=&\sqrt{\frac{(\eta+\alpha)^2+\frac{(\gamma-\alpha\beta)^2}{1-\beta^2}}{(\eta+\alpha)^2+(1-\alpha^2)}}.
\end{eqnarray}

Furthermore, we have
\begin{equation*}
    \rho''(\eta')=\frac{AB(\beta-C)v^3-3v(\eta'+\beta)[ABv^2-u(\eta'+\beta)]}{v^6}.
\end{equation*}
Again, using the relation
$$ABv^2|_{\eta'=\eta'_0}=u|_{\eta'=\eta'_0}(\eta'_0+\beta),$$
it follows that
\begin{equation}
    \rho''(\eta'_0)=\frac{AB(\beta-C)}{v^3|_{\eta'=\eta'_0}}=\frac{AB[(1-\eta)\beta-\gamma]}{v^3|_{\eta'=\eta'_0}}.
\end{equation}
Assume $\beta >0$.
Therefore, if $\eta>1-\frac{\gamma}{\beta}$, we have $\rho''(\eta'_0)<0$ and thus $\rho(\eta')$ attains its maximum at $\eta'_0$.
On the other hand, if $\eta<1-\frac{\gamma}{\beta}$, $\rho''(\eta'_0)>0$ and thus $\rho(\eta')$ attains its minimum at $\eta'_0$.

Moreover, if $\gamma >\beta$, then $1-\frac{\gamma}{\beta}<0$.
Thus $\eta>1-\frac{\gamma}{\beta}$ always holds and $\rho(\eta')$ attains its maximum at $\eta'_0$.
On the other hand, if $0<\gamma\leq\beta$, then $0\leq 1-\frac{\gamma}{\beta}<1$.
If $\eta\geq 1$ then $\rho(\eta')$ still attains its maximum at $\eta'_0$.
Otherwise we need to compare $\eta$ and $1-\frac{\gamma}{\beta}$.
Hence, we have proved the following result:
\begin{theorem}\label{theorem:rho-upper-lower-bounds}
    If $\eta\geq 1$ and $0<\beta, \gamma<1$, then $\rho(\eta')$ attains its maximum $\rho_{\max}=\rho(\eta'_0)$ (Equation~\ref{eq:rho-eta-prim-0}) at $\eta'_0$, and attains its minimum $\rho_{\min}$ at either
    $$\rho(0)=AC=\frac{\beta\eta+\gamma}{\sqrt{\eta^2+2\alpha\eta+1}}$$ or
    $$\rho(\infty)=AB=\frac{\eta+\alpha}{\sqrt{\eta^2+2\alpha\eta+1}}.$$ We have $\rho_{\min}\leq\rho\leq\rho_{\max}$.
\end{theorem}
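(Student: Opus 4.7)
The plan is to apply a standard second-derivative test at the stationary point $\eta'_0$ already identified in the discussion preceding the theorem, and then locate $\rho_{\min}$ at the boundary of the domain $[0,\infty)$. First I would make precise the observation that the first-order condition $\rho'(\eta')=0$, after clearing the $v^3$ denominator and using $v^2=(\eta')^2+2\beta\eta'+1$, collapses to a single \emph{linear} equation in $\eta'$, since the $(\eta')^2$ terms cancel between $ABv^2$ and $u(\eta'+\beta)$. Hence $\rho$ has at most one stationary point on all of $\mathbb{R}$, namely the $\eta'_0$ given by Equation~\ref{eq:eta-prim-0}.

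Next I would read off the sign of $\rho''(\eta'_0)$ from the formula
\[
\rho''(\eta'_0)=\frac{AB\bigl[(1-\eta)\beta-\gamma\bigr]}{v^3|_{\eta'=\eta'_0}}
\]
already derived just above the theorem. Under the hypotheses $\eta\geq 1$, $\beta>0$, and $\gamma>0$, the bracket satisfies $(1-\eta)\beta-\gamma\leq -\gamma<0$; combined with $A>0$, $B=\eta+\alpha\geq 0$, and $v^3>0$, this yields $\rho''(\eta'_0)\leq 0$, strictly so in the nondegenerate case. Whenever $\eta'_0\in[0,\infty)$ it is therefore the unique interior local, and hence global, maximum of $\rho$ on the half-line, and substituting $\eta'_0$ back into $\rho$ and using the identity $Bv^2|_{\eta'_0}=(B\eta'_0+C)(\eta'_0+\beta)$ recovers the closed form in Equation~\ref{eq:rho-eta-prim-0}.

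Finally, since $\rho$ has no other stationary point on $[0,\infty)$, it is strictly monotone on each side of $\eta'_0$, so $\rho_{\min}$ must be attained at one of the two boundary values
\[
\rho(0)=AC=\frac{\beta\eta+\gamma}{\sqrt{\eta^2+2\alpha\eta+1}},\qquad \rho(\infty)=AB=\frac{\eta+\alpha}{\sqrt{\eta^2+2\alpha\eta+1}},
\]
whichever is smaller, and the chain $\rho_{\min}\leq\rho\leq\rho_{\max}$ follows immediately. The main obstacle I anticipate is the corner case in which Equation~\ref{eq:eta-prim-0} produces $\eta'_0<0$, which the stated hypotheses do not a priori exclude (for example when $\alpha$ is close to $-1$): there the ``max at $\eta'_0$'' step must be replaced by a direct monotonicity argument on $[0,\infty)$, using the absence of roots of $\rho'$ on the half-line to conclude that $\rho$ is monotone, and hence that both $\rho_{\max}$ and $\rho_{\min}$ are attained at endpoints.
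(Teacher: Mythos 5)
Your proposal is correct and follows essentially the same route as the paper: the first-order condition is linear in $\eta'$, so $\eta'_0$ is the unique stationary point, the sign of $\rho''(\eta'_0)=AB[(1-\eta)\beta-\gamma]/v^3|_{\eta'=\eta'_0}$ is negative under $\eta\geq 1$ and $\beta,\gamma>0$, and the minimum is therefore pushed to the endpoints $\rho(0)$ and $\rho(\infty)$. Your explicit treatment of the corner case $\eta'_0<0$ (possible when $\alpha$ is near $-1$) is a useful refinement the paper leaves implicit; even there the chain $\rho_{\min}\leq\rho\leq\rho_{\max}$ survives, since $\rho(\eta'_0)$ is still an upper bound over all of $\mathbb{R}$ and $\rho$ is monotone on $[0,\infty)$ so its infimum is an endpoint value.
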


As an exercise, let us verify $\rho(\eta'_0)\geq\rho(0)$ and $\rho(\eta'_0)\geq\rho(\infty)$ when $\eta\geq 1$.
Clearly $\rho(\eta'_0)\geq\rho(\infty)$.
We now prove that $\rho(\eta'_0)\geq\rho(0)$ under the condition of Theorem~\ref{theorem:rho-upper-lower-bounds}.
\begin{corollary}
If $\eta\geq 1$ and $0<\beta, \gamma<1$, $\rho(\eta'_0)\geq\rho(0)$.
\end{corollary}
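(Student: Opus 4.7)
The plan is to reduce the inequality $\rho(\eta'_0) \geq \rho(0)$ to a simple algebraic comparison. First, I would observe that under the hypothesis $\eta \geq 1$ and $0 < \beta, \gamma < 1$, both quantities are non-negative: $\rho(\eta'_0)$ is defined as a square root in Equation~\ref{eq:rho-eta-prim-0}, and $\rho(0) = \frac{\beta\eta + \gamma}{\sqrt{\eta^2 + 2\alpha\eta + 1}}$ has a positive numerator (since $\beta, \gamma, \eta > 0$). Hence it suffices to establish $\rho(\eta'_0)^2 \geq \rho(0)^2$.

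Next, I would exploit the identity $(\eta+\alpha)^2 + (1-\alpha^2) = \eta^2 + 2\alpha\eta + 1$, which shows that the two squared expressions share a common denominator. After clearing this denominator and also multiplying through by $1 - \beta^2 > 0$ (possible because $|\beta| < 1$), the inequality becomes a polynomial inequality in $\eta$. Expanding and collecting in powers of $\eta$, the constant and mixed terms can be combined using the subsidiary identity $(1-\beta^2)(\alpha^2 - \gamma^2) + (\gamma - \alpha\beta)^2 = (\alpha - \beta\gamma)^2$, which is routine to verify. The difference of the two sides then collapses to the perfect square
$$\big[(1-\beta^2)\eta + (\alpha - \beta\gamma)\big]^2 \geq 0,$$
which finishes the argument.

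The main obstacle is guessing that the resulting quadratic in $\eta$ factors as a perfect square — without that guess, one is left staring at a messy quadratic and hoping its discriminant vanishes. The guess itself arises naturally by asking when equality $\rho(\eta'_0) = \rho(0)$ holds: this occurs precisely when $\eta'_0 = 0$, and by Equation~\ref{eq:eta-prim-0} that is exactly the condition $(1-\beta^2)\eta + (\alpha - \beta\gamma) = 0$. So this linear form is forced to be the one that vanishes at equality, which tells us it must appear squared in the difference. A second, more conceptual route would invoke that $\rho''(\eta'_0) < 0$ under the hypothesis and that $\eta'_0$ is the unique critical point of $\rho(\eta')$, hence a global maximum on $[0, \infty)$ — but this essentially reruns the theorem's own proof rather than offering an independent check, so the direct algebraic route seems preferable as a stand-alone corollary.
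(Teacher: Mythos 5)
Your proof is correct, and it takes a genuinely different and cleaner route than the paper's. The paper first uses the constraint $\eta'_0\geq 0$ to deduce $\gamma\geq\alpha\beta$ and the bound $\eta\geq\frac{\beta\gamma-\alpha}{1-\beta^2}$, and then splits into two cases on the sign of $\beta\gamma-\alpha$, bounding $\delta=(\eta+\alpha)^2-(\beta\eta+\gamma)^2$ from below separately in each case (invoking $\eta\geq 1$ in the second). You instead show that the difference of squares is \emph{identically} a perfect square: writing $D=(\eta+\alpha)^2+(1-\alpha^2)=\eta^2+2\alpha\eta+1$ for the common denominator, one has
\begin{equation*}
D\,(1-\beta^2)\,\bigl(\rho^2(\eta'_0)-\rho^2(0)\bigr)=\bigl[(1-\beta^2)\eta+(\alpha-\beta\gamma)\bigr]^2,
\end{equation*}
which I have verified (the cross and constant terms do combine via $(1-\beta^2)(\alpha^2-\gamma^2)+(\gamma-\alpha\beta)^2=(\alpha-\beta\gamma)^2$ exactly as you claim). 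Since $D>0$ and $1-\beta^2>0$, and since $\rho(\eta'_0)\geq 0$ (valid because $\eta+\alpha\geq 0$ under $\eta\geq 1$) and $\rho(0)>0$, the inequality follows with no case analysis. Your approach buys two things the paper's does not: it exposes that the hypotheses are stronger than necessary (only $|\beta|<1$ and $\eta+\alpha\geq 0$ are really used, so $\gamma>0$ and $\eta\geq 1$ per se are dispensable), and it identifies the equality case as precisely $\eta'_0=0$, i.e., $(1-\beta^2)\eta+(\alpha-\beta\gamma)=0$, which is also a nice consistency check against Equation~\ref{eq:eta-prim-0}. The paper's argument, by contrast, stays closer in spirit to the surrounding development of Theorem~\ref{theorem:rho-upper-lower-bounds} but is longer and obscures why the result holds.
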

\begin{proof}
First notice that we must have $\eta'_0\geq 0$. By Equation~\ref{eq:eta-prim-0},
$$\frac{1-\beta^2}{\gamma-\alpha\beta}\eta+\frac{\alpha-\beta\gamma}{\gamma-\alpha\beta}\geq 0.$$
If $\gamma <\alpha\beta$, then we have
$$(1-\beta^2)\eta + (\alpha-\beta\gamma)\leq 0,$$
Note that, under the condition $0<\beta, \gamma<1$, we must have $0<\gamma <\alpha\beta$.
As a result, $\alpha >0$ and
$$\eta\leq\frac{\beta\gamma-\alpha}{1-\beta^2}\leq\frac{\beta\cdot\alpha\beta-\alpha}{1-\beta^2}=\frac{\alpha(\beta^2-1)}{1-\beta^2}=-\alpha.$$
This is impossible given that $\eta\geq 1$ and $\alpha >0$.
Consequently, we must have $\gamma\geq\alpha\beta$. Hence,
$$(1-\beta^2)\eta + (\alpha-\beta\gamma)\geq 0,$$
which gives
\begin{equation}\label{eq:eta-condition}
    \eta\geq\frac{\beta\gamma-\alpha}{1-\beta^2}.
\end{equation}
Now define
\begin{eqnarray*}\label{eq:Delta}
    \Delta&=&\rho^2(\eta'_0)-\rho^2(0)\\\nonumber
    &=&A^2\big((\eta+\alpha)^2+\frac{(\gamma-\alpha\beta)^2}{1-\beta^2}-(\beta\eta+\gamma)^2\big)\\\nonumber
    &=&\frac{A^2}{1-\beta^2}\big[(1-\beta^2)\big((\eta+\alpha)^2-(\beta\eta+\gamma)^2\big)+(\gamma-\alpha\beta)^2\big].
\end{eqnarray*}
Next, let us consider
\begin{eqnarray}\label{eq:delta}
\delta &=& (\eta+\alpha)^2-(\beta\eta+\gamma)^2\\\nonumber
&=&(1-\beta^2)\eta^2+2(\alpha-\beta\gamma)\eta + \alpha^2-\gamma^2.
\end{eqnarray}
\textbf{(Case 1)} Suppose that $\beta\gamma-\alpha\geq 0$.
Plugging Equation~\ref{eq:eta-condition} into Equation~\ref{eq:delta}, it follows that
\begin{eqnarray*}
\delta &\geq& (1-\beta^2)\cdot\frac{(\beta\gamma-\alpha)^2}{(1-\beta^2)^2}+2(\alpha-\beta\gamma)\cdot\frac{\beta\gamma-\alpha}{1-\beta^2} + \alpha^2-\gamma^2\\\nonumber
&=&\frac{1}{1-\beta^2}\big[-(\alpha-\beta\gamma)^2+(\alpha^2-\gamma^2)(1-\beta^2)\big]\\
&=&\frac{1}{1-\beta^2}(2\alpha\beta\gamma-\gamma^2-\alpha^2\beta^2)\\
&=&-\frac{1}{1-\beta^2}(\gamma-\alpha\beta)^2.
\end{eqnarray*}
Therefore, we can obtain
\begin{eqnarray*}
\Delta &\geq & \frac{A^2}{1-\beta^2}\big[(1-\beta^2)\big(-\frac{1}{1-\beta^2}(\gamma-\alpha\beta)^2\big)+(\gamma-\alpha\beta)^2\big]\\
&=& \frac{A^2}{1-\beta^2}\big[-(\gamma-\alpha\beta)^2+(\gamma-\alpha\beta)^2\big]\\
&=& 0.
\end{eqnarray*}
\textbf{(Case 2)} On the other hand, suppose that $\beta\gamma-\alpha < 0$.
This implies $\alpha >\beta\gamma$. Consequently, by Equation~\ref{eq:delta} and the condition $\eta\geq 1$,
%we obtain
\begin{eqnarray*}
\delta &\geq & (1-\beta^2)\eta^2+\gamma^2\beta^2-\gamma^2\\
&\geq & (1-\beta^2) + \gamma^2(\beta^2-1)\\
&=&(1-\beta^2)(1-\gamma^2)\\
&\geq & 0.
\end{eqnarray*}
As a result, it follows that
\begin{equation*}
\Delta \geq \frac{A^2}{1-\beta^2}(\gamma-\alpha\beta)^2\geq 0.
\end{equation*}
Combining Case 1 and Case 2, it always holds that $\Delta\geq 0$.
Therefore $\rho(\eta'_0)\geq \rho(0)$. This completes the proof.
\end{proof}

\section{Case Study: Index Tuning}\label{sec:experiments}

We now present a case study that applies the framework in Algorithm~\ref{alg:combine-costs} to index tuning.
We implemented a prototype system piggybacked on the index tuning framework presented in~\cite{ChaudhuriN97}, which has been incorporated into Microsoft SQL Server~\cite{AgrawalCKMNS05,ChaudhuriN07}.

\subsection{System Architecture}\label{sec:experiments:architecture}

There has been extensive research in the area of index tuning over the past two decades~\cite{ChaudhuriN07}.
Major commercial database systems, including Oracle~\cite{DagevilleDDYZZ04}, IBM DB2~\cite{ValentinZZLS00}, and Microsoft SQL Server~\cite{ChaudhuriN97}, are all equipped with index tuning tools.
At the highest level, index tuning is very similar to query optimization.
Existing tools adopt a cost-based approach that picks an index configuration from a number of candidates that results in the minimum optimizer's estimated cost for a given workload consisting of multiple queries~\cite{ChaudhuriN97}.
A basic step here is then to estimate cost for a query over a candidate index configuration.
While different systems may have different implementations, they essentially rely on the so-called ``what-if'' utility that allows the optimizer to generate query plans and estimate their costs for a given query using ``hypothetical'' or ``virtual'' indexes that are not actually materialized~\cite{ChaudhuriN98}.

\begin{figure}[t]
\centering
    \includegraphics[width=0.9\columnwidth]{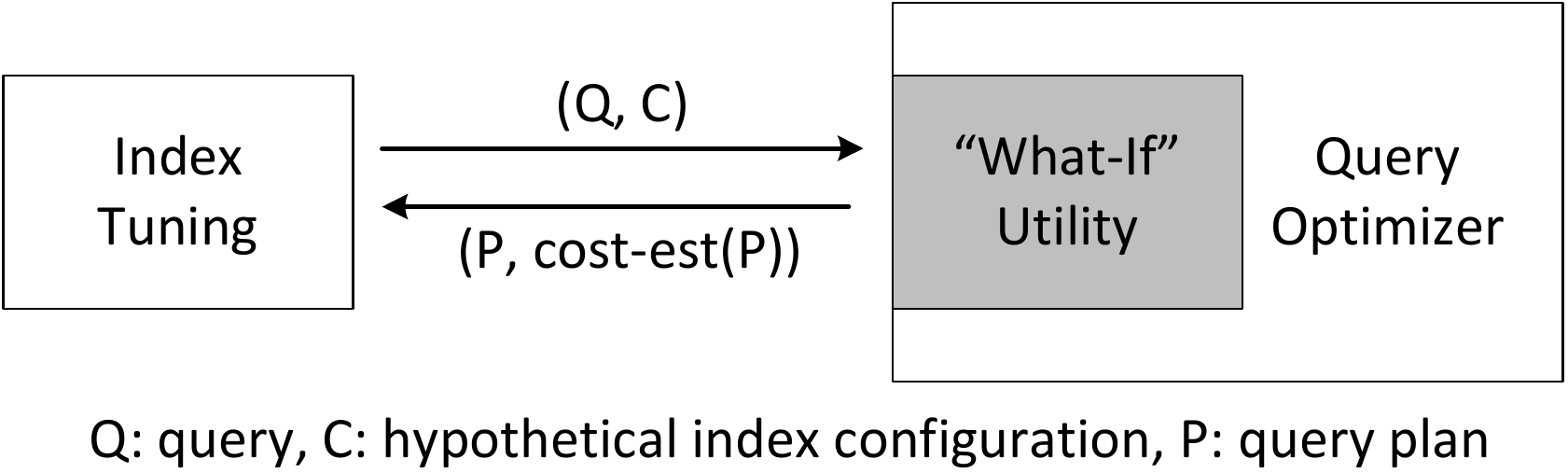}
\vskip -2ex
\caption{Architecture of index tuning using ``what-if'' utility.}
\label{fig:what-if-architecture}
\vskip -2ex
\end{figure}

Figure~\ref{fig:what-if-architecture} outlines this architecture.
The index tuning component sends a query $Q$ with the description of a candidate index configuration $C$ to the optimizer.
The ``what-if'' utility simulates $C$ by generating hypothetical indexes $C$ contains.
That is, it generates all metadata and statistics information about these indexes and makes them visible to the optimizer (only in this particular session).
The key observation here is that query optimization does not require physical persistence of these indexes: The optimizer only needs metadata and statistics to estimate costs for query plans that use these indexes.
The best plan $P$ chosen by the optimizer under the configuration $C$, along with optimizer's cost estimate for $P$, is returned to the index tuning component.

\begin{figure}[t]
\centering
    \includegraphics[width=0.9\columnwidth]{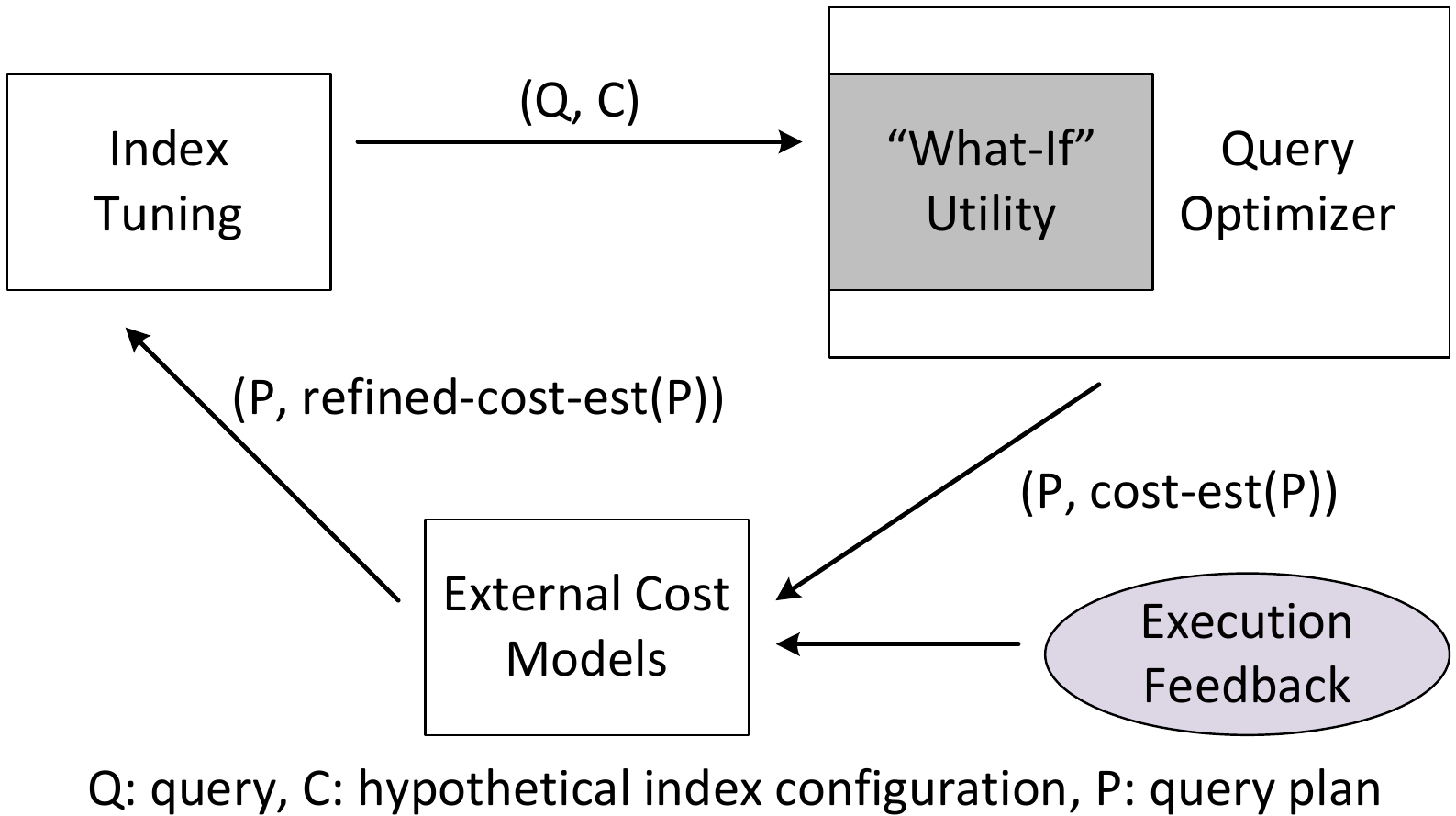}
\vspace{-1em}
\caption{Architecture of index tuning using feedback.}
\label{fig:feedback-architecture}
\vspace{-1em}
\end{figure}

Our idea of incorporating feedback into the architecture of index tuning is simple.
As illustrated in Figure~\ref{fig:feedback-architecture}, after the optimizer returns the best plan $P$, we refine its cost estimate using external cost models built on top of available execution feedback (i.e., Algorithm~\ref{alg:combine-costs}).
We then send the refined cost estimate, instead of the original cost estimate from the optimizer, to the index tuning component.
Note that our approach is \emph{passive} rather than \emph{proactive}: We do not use external cost models inside query optimizer to affect its plan choice.
As a result, if the query optimizer ends up choosing a poor execution plan because of bad cost estimates, we cannot bail it out.
However, by refining cost estimates afterwards, we increase the chance of \emph{detecting} such bad plans and therefore avoiding corresponding disastrous index configurations that may lead to serious query performance regression.

\subsection{Experimental Settings}

The effectiveness of Algorithm~\ref{alg:combine-costs} relies on the following factors:
(1) the backbone operators $\mathcal{O}$; (2) the operator-level cost models $\mathcal{M}$; and (3) the execution feedback $\mathcal{F}$.
For (1), as we have discussed in Section~\ref{sec:framework:implementation}, we use leaf operators as backbone operators;
For (2), we use the operator-level modeling approach presented in~\cite{Li12Robust}, as it represents the state of the art to the best of our knowledge;
For (3), we assume sufficient amount of execution feedback is available for leaf operators.

We used both synthetic and real database workloads in our evaluation.
For synthetic data, we used both TPC-H and TPC-DS at the scale of 10GB; for real data, we used three customer databases Real-1, Real-2, and Real-3.
Table~\ref{tab:databases} presents the details of the workloads we used and their characteristics.
$\eta'$ is very large over all of these workloads.
We focused on single-query workloads and conducted experiments under a workstation configured with two Intel E5-2650L CPUs and 192GB main memory.
We ran Microsoft SQL Server 2017 under 64-bit Windows Server 2012.

\begin{table}%[!htb]
\centering
\begin{tabular}{|l|r|r|r|r|}
\hline
Name & Size & \#Queries & $\eta$ & $\eta'$\\
\hline
\hline
TPC-DS & 10GB & 99 & 56.6 & $3.1\times 10^4$\\
TPC-H & 10GB & 22 & 51.2 & $4.8\times 10^3$\\
\hline
\hline
Real-1 & 40GB & 25 & 1.7 & $4.5\times 10^3$\\
Real-2 & 60GB & 12 & 429.9 & $5.5\times 10^5$\\
Real-3 & 100GB & 20 & 217.9 & $2.4\times 10^6$\\
\hline
\end{tabular}
\caption{Workloads used in experimental evaluation. $\eta$ and $\eta'$ are averaged over all queries in the workload.}
\label{tab:databases}
\vskip -2ex
\end{table}

\subsection{Evaluation Results}

\vspace{-0.5em}
\paragraph*{Initial Configuration}
Since index tuning needs to start from an initial configuration, we generated various initial configurations for our experiments in the following way.
For each query $q$ in the workload, we generated different index configurations by limiting the number of indexes recommended by the index tuning tool (without using execution feedback).
Specifically, we keep asking the index tuning tool to return the next index until it runs out of recommendations.
Suppose that the indexes recommended subsequently are $i_1, ..., i_n$.
We then have $n$ configurations $I_1=\{i_1\}$, $I_2=\{i_1, i_2\}$, ..., and $I_n=\{i_1, i_2, ..., i_n\}$.
We used each of these $n$ configurations as a different initial configuration.

\vspace{-0.5em}
\paragraph*{Execution Feedback}

We generate execution feedback in the following manner.
For each initial configuration, we run the query and collect its execution time.
For this purpose, we enable the ``statistics XML'' utility provided by Microsoft SQL Server to track operator-level execution information.
We then randomly pick up to five executed query plans into the execution feedback repository $\mathcal{F}$.

\vspace{-0.5em}
\paragraph*{Performance Metrics}
We evaluate both the effectiveness of Algorithm~\ref{alg:combine-costs} and the overall improvement of index tuning when execution feedback is utilized, with the following metrics:
\begin{enumerate}
    \item \textbf{(Effectiveness of Algorithm~\ref{alg:combine-costs})} Following our discussion in Section~\ref{sec:analysis:metrics}, we use the Pearson and Spearman correlation coefficients as our performance metrics.
    \item \textbf{(Overall improvement)} We measure the \emph{relative} improvement of the index configuration $I^{\new}$ returned by index tuning over the original index configuration $I^{\old}$, defined as follows.
    Let $c(q, I)$ and $a(q, I)$ be the estimated and actual execution costs of $q$ over a configuration $I$, respectively.
    We define the \emph{estimated improvement} of $I^{\new}$ over $I^{\old}$ as
    \begin{eqnarray*}
        c(I^{\old}, I^{\new})&=&\big(c(q, I^{\old}) - c(q, I^{\new})\big) / c(q, I^{\old})\\
        &=& 1 - c(q, I^{\new}) / c(q, I^{\old}).
    \end{eqnarray*}
    We also define the \emph{actual improvement} of $I^{\new}$ over $I^{\old}$ as
    \begin{eqnarray*}
        a(I^{\old}, I^{\new})&=&\big(a(q, I^{\old}) - a(q, I^{\new})\big) / a(q, I^{\old})\\
        &=& 1 - a(q, I^{\new}) / a(q, I^{\old}).
    \end{eqnarray*}
    We use the actual improvement as our metric, whereas the estimated improvement is useful for controlling the index tuning component, as we will see.
\end{enumerate}

\vspace{-1em}
\paragraph*{Results}

\begin{figure}
\centering
    \includegraphics[angle=90, width=\columnwidth]{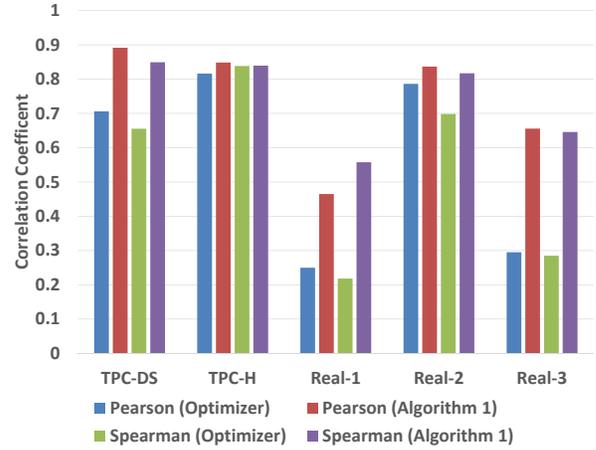}
\vspace{-3em}
\caption{Comparison of correlation coefficients using cost estimates from Algorithm~\ref{alg:combine-costs} over using optimizer's estimates.}
\label{fig:cc}
\vspace{-1em}
\end{figure}

Figure~\ref{fig:cc} presents the correlation coefficients between estimated costs and actual CPU times.
We compare the correlation coefficients using cost estimates produced by Algorithm~\ref{alg:combine-costs} against ones using optimizer's estimates.
We observe significant improvement over four of the five workloads.
This implies that the cost estimates from Algorithm~\ref{alg:combine-costs} are considerably better than optimizer's original cost estimates.

\begin{figure}[t]
\centering
\subfigure[Index tuning without feedback.]{ \label{fig:dist:improvement-no-feedback-0}
\includegraphics[width=0.9\columnwidth]{./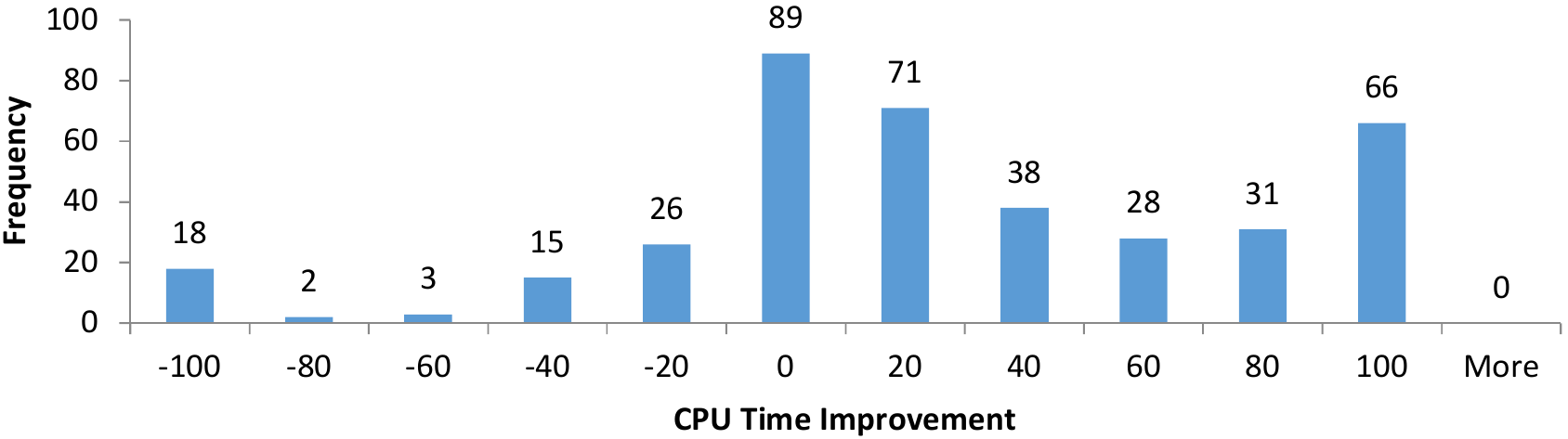}}
\subfigure[Index tuning with feedback.]{ \label{fig:dist:improvement-feedback-0}
\includegraphics[width=0.9\columnwidth]{./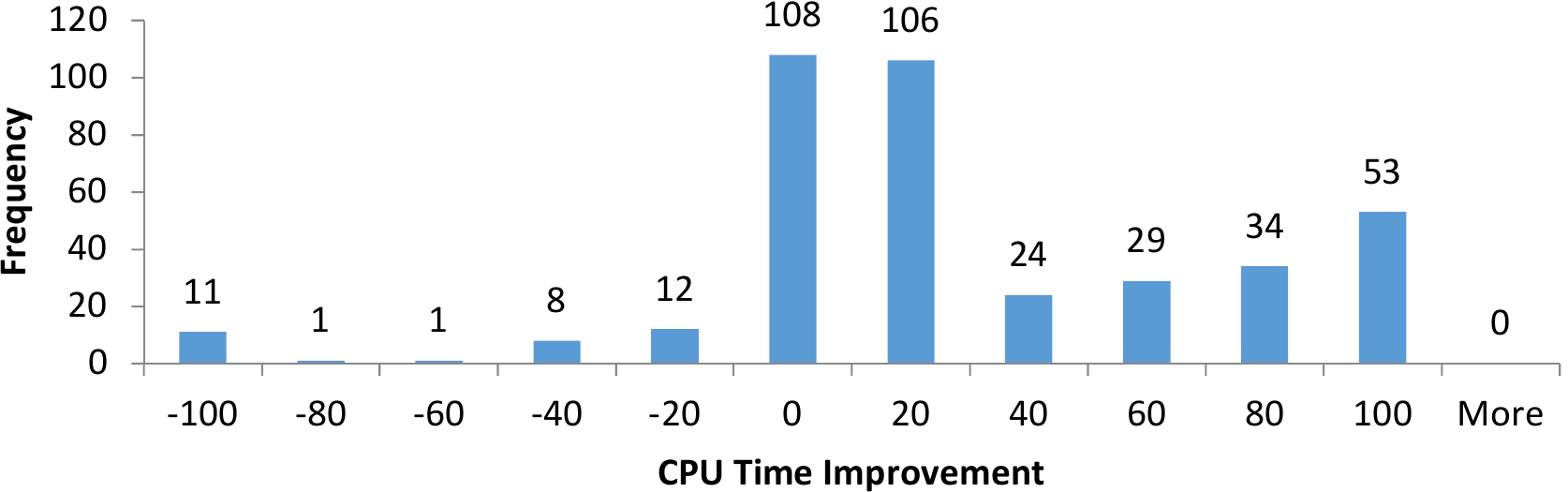}}
\vspace{-1em}
\caption{The distributions of CPU time improvement over TPC-DS queries (estimated improvement threshold $\tau = 0$).}
\label{fig:dist:cpu-time-TPC-DS-0}
\vspace{-1em}
\end{figure}

\begin{figure}[t]
\centering
\subfigure[Index tuning without feedback.]{ \label{fig:dist:improvement-no-feedback-10}
\includegraphics[width=0.9\columnwidth]{./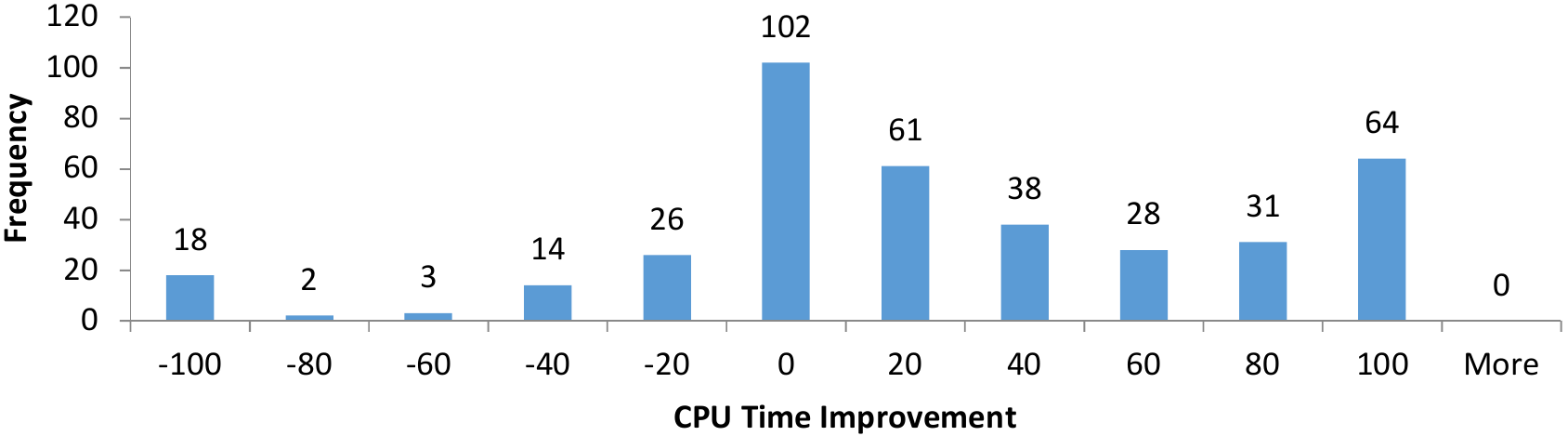}}
\subfigure[Index tuning with feedback.]{ \label{fig:dist:improvement-feedback-10}
\includegraphics[width=0.9\columnwidth]{./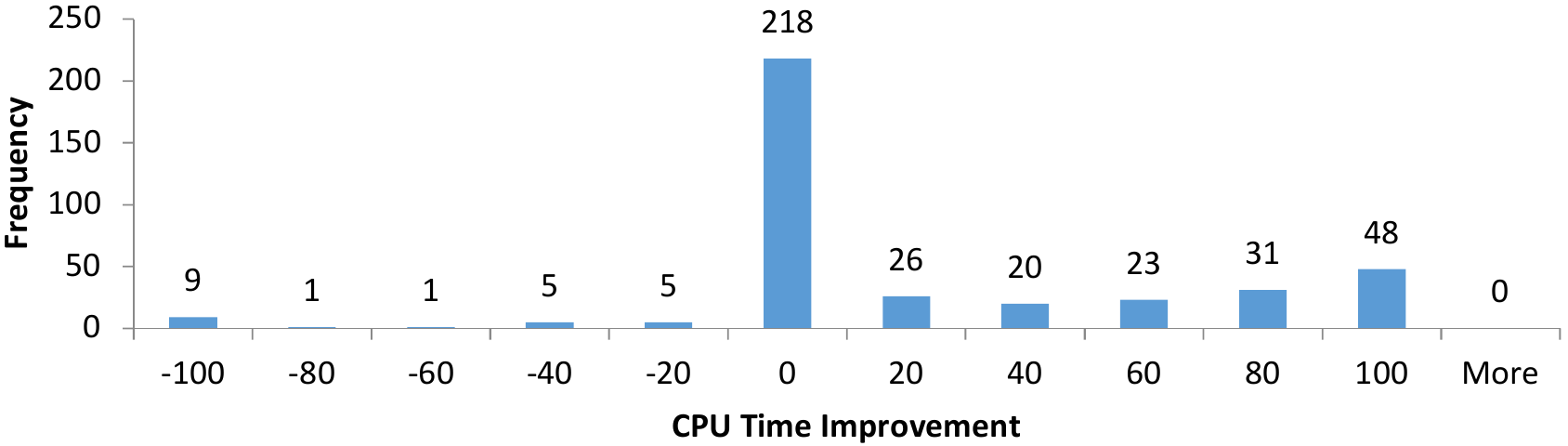}}
\vskip -2ex
\caption{The distributions of CPU time improvement over TPC-DS queries (estimated improvement threshold $\tau = 0.1$).}
\label{fig:dist:cpu-time-TPC-DS-10}
\vskip -3ex
\end{figure}

\begin{figure}[t]
\centering
\subfigure[Index tuning without feedback.]{ \label{fig:dist:improvement-no-feedback-20}
\includegraphics[width=0.9\columnwidth]{./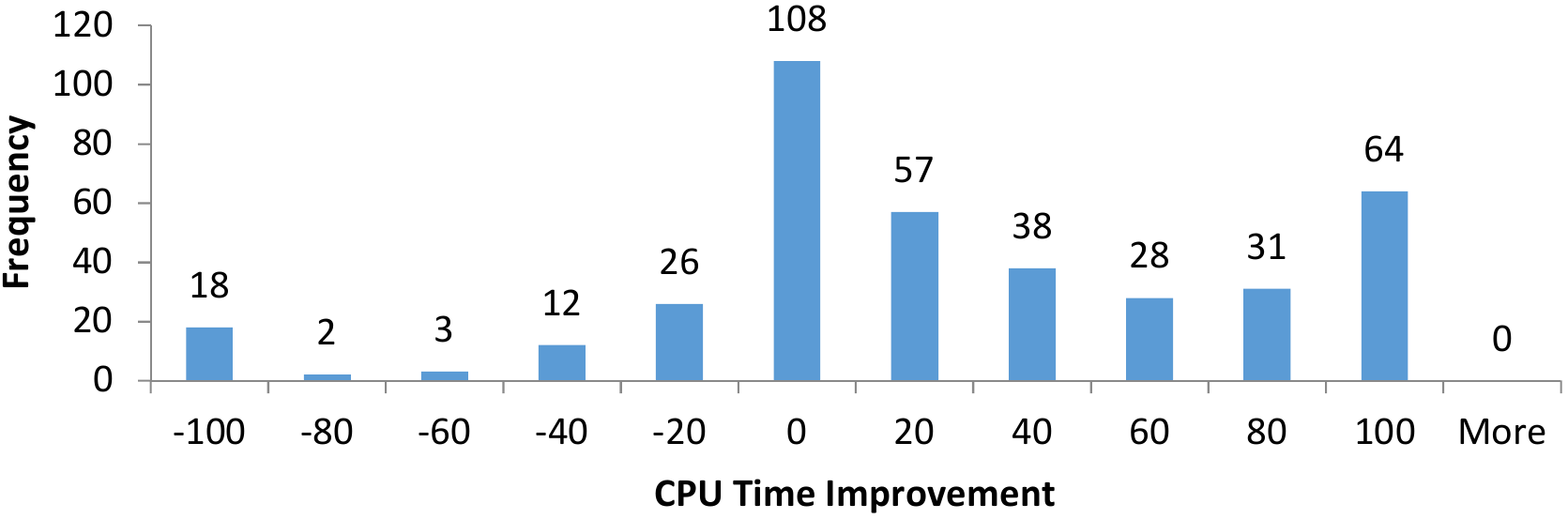}}
\subfigure[Index tuning with feedback.]{ \label{fig:dist:improvement-feedback-20}
\includegraphics[width=0.9\columnwidth]{./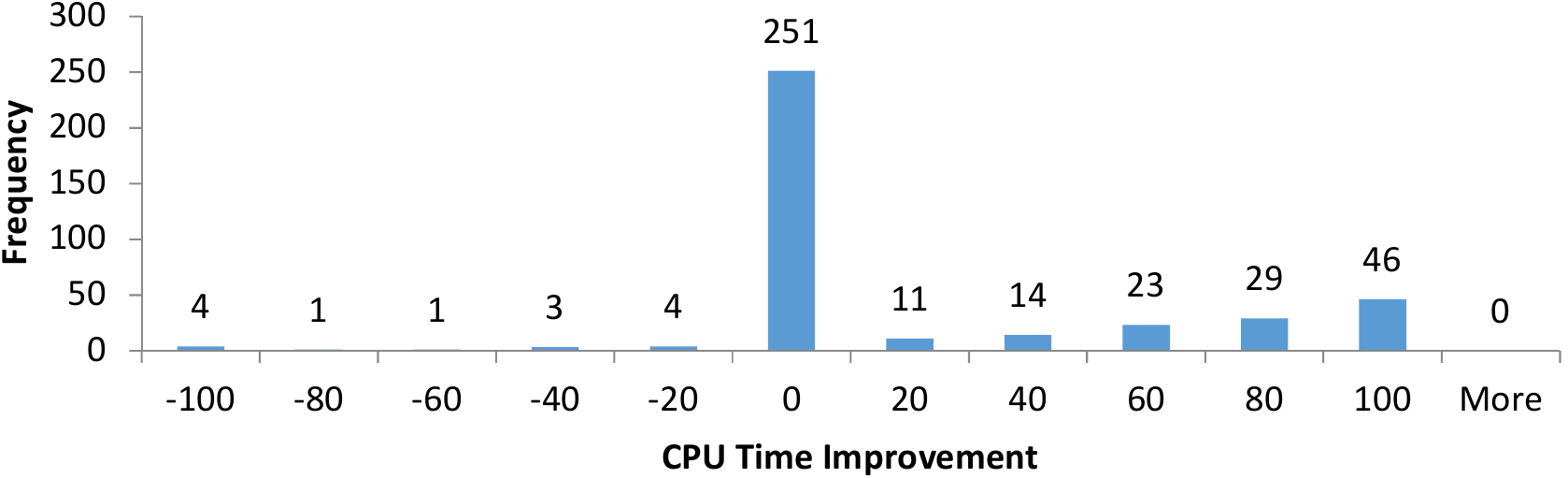}}
\vskip -2ex
\caption{The distributions of CPU time improvement over TPC-DS queries (estimated improvement threshold $\tau = 0.2$).}
\label{fig:dist:cpu-time-TPC-DS-20}
\vskip -3ex
\end{figure}

In Figures~\ref{fig:dist:cpu-time-TPC-DS-0},~\ref{fig:dist:cpu-time-TPC-DS-10}, and~\ref{fig:dist:cpu-time-TPC-DS-20}, we present the distributions of the actual improvement (averaged over all tested configurations) for TPC-DS queries by using execution feedback compared with index tuning without feedback.
In index tuning, usually there is a threshold $\tau$ for \emph{estimated improvement} and a configuration is recommended only if its estimated improvement is above the threshold.
In our experiments, we therefore varied this threshold from 0 to 0.2 (i.e., 20\% estimated improvement).
We have the following observations.
First, using execution feedback in index tuning significantly reduces the chance of query performance regression.
The number of cases with 20\% performance regression (i.e., -20\% improvement) is reduced from 64 to 33 (48.4\% reduction) when $\tau=0$, is reduced from 63 to 21 (66.7\% reduction) when $\tau=0.1$, and is reduced from 61 to 13 (78.7\% reduction) when $\tau=0.2$.

Second, when increasing the estimated improvement threshold $\tau$, the chance of performance regression decreases for both index tuning with and without execution feedback.
However, the chance reduces much faster for index tuning using execution feedback.
This implies that, while index tuning with execution feedback can still misestimate performance improvement, the misestimation is marginal compared to query optimizer's cost estimates.
Such marginal misestimation is more likely to be overcome by slightly increasing the threshold for estimated improvement (20\% is a rule of thumb in practice).

Third, by comparing the figures for index tuning with and without execution feedback, we also observe that estimated improvement is diminished in more cases when execution feedback is utilized.
(Notice that there are more cases in the bin with less than 20\% estimated improvement.)
However, cases with more significant improvement ($\geq$40\%) are less impacted.
In other words, cases falling into the bins with 0\% to 40\% improvement are tend to be moved into the bins with 0\% to 20\% improvement.
Therefore, if performance improvement is indeed significant, index tuning with execution feedback is unlikely to dismiss it.
%Nonetheless, it remains interesting to investigate this shifting pattern and we leave it for future work.

We have observed similar results on the other workloads we tested, though query performance regression is not as significant as we see on the TPC-DS workload.

\vspace{-0.5em}
\paragraph*{Remarks}
Let $X$ be the relative error between estimated cost and actual cost of individual queries.
That is,
$$X=\frac{c(q, I)-a(q,I)}{a(q,I)}=\frac{c(q, I)}{a(q, I)} - 1.$$
We have $c(q,I)=(1+X)\cdot a(q,I)$.
Taking expectation we have
$$E[c(q,I)]=(1+E[X])\cdot E[a(q,I)]=(1+\epsilon)\cdot E[a(q,I)].$$
Since $c(W, I)=\sum_{k=1}^n c(q_k,I)\cdot w_k$, it follows that
$$c(W,I)=\sum_{k=1}^n (1+X_k)\cdot a(q_k, I)\cdot w_k.$$
Taking expectation we have
$$E[c(W,I)]=\sum_{k=1}^n (1+E[X_k])\cdot E[a(q_k, I)]\cdot w_k.$$
Since $E[X_k]=\epsilon$ regardless of $k$, it follows that
$$E[c(W,I)]=(1+\epsilon)\cdot\sum_{k=1}^n E[a(q_k, I)]\cdot w_k=(1+\epsilon)\cdot E[a(W,I)].$$

We therefore have proved the following lemma:
\begin{lemma}
If the expected relative error at query level is $\epsilon$, then the expected relative error at workload level is also $\epsilon$.
\end{lemma}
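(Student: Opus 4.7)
The plan is to reduce the workload-level claim to a simple application of linearity of expectation, once the per-query hypothesis has been stated in its most useful form. The quantity we care about at the workload level is $E\bigl[c(W,I)/a(W,I) - 1\bigr]$, but this ratio of random variables is awkward to analyze directly. The key observation that makes the lemma tractable is that if we define ``expected relative error'' via $E[c(\cdot)] / E[a(\cdot)] - 1$ rather than $E[c(\cdot)/a(\cdot) - 1]$, the claim becomes a one-line consequence of linearity.

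First, I would make the per-query hypothesis explicit: for every query $q_k$, setting $X_k = c(q_k,I)/a(q_k,I) - 1$, the assumption ``expected relative error at query level is $\epsilon$'' means $E[X_k] = \epsilon$ for all $k$. Rearranging gives the equivalent multiplicative form $E[c(q_k,I)] = (1+\epsilon)\cdot E[a(q_k,I)]$, which is the form that composes well across queries.

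Second, I would write the workload-level costs as weighted sums $c(W,I) = \sum_{k=1}^n w_k\cdot c(q_k,I)$ and $a(W,I) = \sum_{k=1}^n w_k\cdot a(q_k,I)$, where $w_k$ are the (deterministic) query weights. Applying linearity of expectation to both sides and substituting the per-query identity yields
\begin{equation*}
E[c(W,I)] = \sum_{k=1}^n w_k\cdot(1+\epsilon)\cdot E[a(q_k,I)] = (1+\epsilon)\cdot E[a(W,I)],
\end{equation*}
which is exactly the statement that the expected relative error at the workload level equals $\epsilon$.

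The only real subtlety, and the step worth emphasizing, is the uniformity assumption $E[X_k] = \epsilon$ for all $k$: if the per-query expected relative errors differed across $k$, the workload-level error would become a $w_k E[a(q_k,I)]$-weighted average of the $E[X_k]$, not a simple $\epsilon$. So the main obstacle is not any calculation but the modeling choice: one must interpret ``expected relative error is $\epsilon$'' as a uniform per-query property and use the multiplicative reformulation, after which the result follows immediately from linearity and no deeper probabilistic tool (such as independence of $c$ and $a$) is needed.
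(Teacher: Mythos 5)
Your proof follows essentially the same route as the paper's: define $X_k = c(q_k,I)/a(q_k,I)-1$ with $E[X_k]=\epsilon$, pass to the multiplicative form $E[c(q_k,I)]=(1+\epsilon)E[a(q_k,I)]$, and conclude by linearity of expectation over the weighted sum. If anything, you are more careful than the paper, which silently writes $E[(1+X)a(q,I)]=(1+E[X])E[a(q,I)]$ without noting that this requires either independence of $X$ and $a(q,I)$ or the ratio-of-expectations reading of ``expected relative error'' that you make explicit.
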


Now consider the estimated and actual improvements of $I_2$ over $I_1$, in terms of expectation:
$$c(I_1, I_2)=\frac{E[c(q, I_1)]-E[c(q,I_2)]}{E[c(q, I_1)]}=1-\frac{E[c(q,I_2)]}{E[c(q,I_1)]},$$
and
$$a(I_1, I_2)=\frac{E[a(q, I_1)]-E[a(q, I_2)]}{E[a(q, I_1)]}=1-\frac{E[a(q,I_2)]}{E[a(q,I_1)]}.$$
Since $E[c(q, I_1)]=(1+\epsilon)\cdot E[a(q,I_1)]$ and $E[c(q, I_2)]=(1+\epsilon)\cdot E[a(q, I_2)]$, we have
$$c(I_1, I_2)=1-\frac{(1+\epsilon)\cdot E[a(q, I_2)]}{(1+\epsilon)\cdot E[a(q, I_1)]}=1-\frac{E[a(q, I_2)]}{E[a(q, I_1)]}=a(I_1, I_2).$$

Therefore, we have the following observation.
\begin{lemma}
In expectation, the estimated improvement observed at query level carries over to the actual improvement.
\end{lemma}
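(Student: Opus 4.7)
The plan is to leverage the multiplicative structure of the relative-error model that was already set up in the preceding ``Remarks'' paragraph, and observe that the common multiplicative factor $(1+\epsilon)$ cancels in any ratio of expected costs. Concretely, I would proceed in three short steps.

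First, I would recall (from the preceding paragraph) the key identity $E[c(q,I)] = (1+\epsilon)\cdot E[a(q,I)]$, which follows immediately from $c(q,I) = (1+X)\cdot a(q,I)$ together with the assumption that the query-level expected relative error $E[X]$ equals $\epsilon$ uniformly across queries and configurations. I would apply this identity twice, once to the initial configuration $I_1$ and once to the new configuration $I_2$, so that both $E[c(q,I_1)]$ and $E[c(q,I_2)]$ acquire the same $(1+\epsilon)$ prefactor.

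Second, I would write the definitions of the expected estimated improvement and the expected actual improvement,
\[
c(I_1,I_2) \;=\; 1 - \frac{E[c(q,I_2)]}{E[c(q,I_1)]}, \qquad a(I_1,I_2) \;=\; 1 - \frac{E[a(q,I_2)]}{E[a(q,I_1)]},
\]
and substitute the identity from the first step into the numerator and denominator of $c(I_1,I_2)$. The $(1+\epsilon)$ factors cancel, leaving exactly $a(I_1,I_2)$.

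The only nontrivial point, and hence the main thing that must be articulated rather than calculated, is the uniformity assumption: the derivation requires that the expected relative error $\epsilon$ be the same for $I_1$ and $I_2$ (and, implicitly, across queries if a workload-level version were desired). I would state this explicitly as the scope of the lemma, since if $\epsilon$ differs between the two configurations the cancellation fails and the lemma does not hold. Beyond noting this, the proof is a one-line algebraic manipulation, so no further obstacle is expected.
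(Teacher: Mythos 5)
Your proposal is correct and matches the paper's own proof essentially line for line: both apply the identity $E[c(q,I)]=(1+\epsilon)\cdot E[a(q,I)]$ to $I_1$ and $I_2$ and cancel the common $(1+\epsilon)$ factor in the ratio defining the improvement. Your explicit remark that the argument hinges on $\epsilon$ being the same for both configurations is a worthwhile clarification that the paper leaves implicit.
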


%\vspace{-1em}
\section{Related Work}\label{sec:relatedwork}

Recently, the problem of estimating query execution time has attracted lots of research attention.
Existing work more or less uses statistical learning techniques~\cite{AhmadDAB11-edbt,AkdereCRUZ12-brown-icde,DugganCPU11,Ganapathi-berkeley09,Li12Robust,WuCHN13,WuCZTHN13}.
The improved cost estimates are useful in a variety of applications, such as admission control~\cite{Tozer-QCop} and query scheduling~\cite{Ahmad-QShufflerVLDBJ11}.

%and in a limited fashion, query optimization~\cite{WuNS16}.

The idea of using execution feedback in query optimization has also been explored in the literature.
Existing work focuses on improving cardinality estimates, using exactly the same cardinality observed in execution (e.g.,~\cite{KabraD98,MarklRSLP04}), statistics built on top of observed cardinality (e.g.~\cite{BrunoC02}), or sampling (e.g.,~\cite{LarsonLZZ07,WuNS16}).
Although this line of work also aims for improving cost estimates (by improving cardinality estimates), its ultimate goal is to impact the decision made by the optimizer so that it may return a different, perhaps better execution plan.
This is different from the goal of the previous work on external cost modeling we have discussed, which does not want to change the query plan.

One noticeable problem when applying feedback, as documented in the literature~\cite{ChaudhuriNR08}, is that \emph{partial} execution feedback may result in \emph{inconsistent} cost estimates that mislead the query optimizer.
That is, if some plans receive improved cost estimates whereas the others do not, then the plan returned by the optimizer might be even worse.
One reason is that, although the optimizer can estimate costs appropriately for query plans with execution feedback, it may underestimate costs for plans without feedback.
%We do not face the same challenge simply because we do not change the plan returned by the optimizer.
%However, we face a similar problem when the index tuning component chooses the best index configuration.
Our analysis in Section~\ref{sec:analysis} indicates that, as long as we have feedback for backbone operators (specifically, when both $\eta$ and $\eta'$ are large), the comparison of plan costs is still reliable.
It is interesting to see the efficacy of our framework when applying it to query optimization.

The problem of index tuning (or, in general, database physical design tuning) has been studied for more than two decades~\cite{FinkelsteinST88}.
Existing index tuning tools use a cost-based architecture that piggybacks on top of query optimizer's cost estimates~\cite{ChaudhuriN97,ValentinZZLS00}.
At the highest level, index tuning tools try to search for an index configuration from a number of candidates that minimizes the total estimated workload cost.
There has been work on different search strategies, such as bottom-up~\cite{ChaudhuriN97}, top-down~\cite{BrunoC05,BrunoC07-merge-reduce}, and constraint-based~\cite{BrunoC08} approaches.
However, to the best of our knowledge, there is no previous work aiming for utilizing improved cost estimates in index tuning.
The framework we studied in this paper can be easily combined with any of the previous search frameworks.
Moreover, there has also been exploratory work towards continuous index tuning~\cite{BrunoC07,SattlerGS03,SchnaitterAMP06}, though such frameworks have not been incorporated into existing index tuning tools as far as we know.

%\vspace{-1em}
\section{Conclusion and Future Work}\label{sec:conclusion}

In this paper, we have studied the problem of operator-level cost modeling with query execution feedback.
We focused ourselves on two major challenges with respect to sparse feedback and mixed cost estimates, and proposed a general framework that deals with both altogether.
We analyzed this framework in detail, and further studied its efficacy in the context of index tuning.

This work opens up a number of interesting directions for future exploration.
First, as we have already briefly discussed in Section~\ref{sec:relatedwork}, it is interesting to further investigate the applicability of our framework in query optimization so that we may be able to obtain better query plan.
Second, although it is straightforward to integrate our framework with existing index tuning architecture, it may have an impact on the search space of index tuning.
In this work, we have only investigated this impact experimentally, and a more formal analysis regarding the change in the search space is desirable if we wish to have a deeper understanding.
Third, although our work is orthogonal to the concrete external cost model being used, in practice different models can make a difference.
There is still lots of work that can be done on model selection and tuning for a given workload.
\iffalse
Fourth, we have shown that different ordering of query execution can have an impact on the short-term effectiveness of our framework, as the predictions made by the models are sensitive to the feedback used in training.
However, it remains an interesting question to understand this impact more deeply.
In particular, how can we know if a model is good enough for use in the framework?
We can further explore the idea of using cross validation in offline training, as we have mentioned in Section~\ref{sec:continuous}.
\fi
All of these are promising directions for future work.

%\newpage
%\clearpage

%\balance
{\small
\bibliographystyle{abbrv}
\bibliography{tuning}
}
%\newpage
%\clearpage

%\appendix

\end{document}